\title{Scalable and Jointly Differentially Private Packing}
\author  {Zhiyi Huang} {The University of Hong Kong }{zhiyi@cs.hku.hk}{}{This work is supported in by a RGC grant HKU17203717E.}
\author {Xue Zhu }{The University of Hong Kong }{xuezhu26@hku.hk}{}{} 
\authorrunning{Z.\ Huang and X.\ Zhu}
\keywords{
    Joint differential privacy;
    packing;
    scalable algorithms
}
\newcommand{\zhiyi}[1]{\textcolor{blue}{(Zhiyi: #1)}}
\newcommand{\xue}[1]{\textcolor{magenta}{(Xue: #1)}}
\renewcommand{\vec}[1]{#1}
\newcommand{\E}{\mathbf{E}}
\newcommand{\Var}{\mathbf{Var}}
\renewcommand{\Pr}{\mathbf{Pr}}
\newcommand{\pmax}{p_{\max}}
\newcommand{\etasum}{\eta_{\mathrm{sum}}}
\newcommand{\kl}{D_{\mathrm{KL}}}
\newcommand{\maxdiv}{D_{\infty}}
\newcommand{\opt}{\mathrm{OPT}}
\newcommand{\alg}{\mathrm{ALG}}
\newcommand{\R}{\mathbf{R}}
\newcommand{\Lap}{\mathrm{Lap}}
\newcommand{\dist}{\mathcal{N}}
\begin{document}

\maketitle

\begin{abstract}
    We introduce an $(\epsilon, \delta)$-jointly differentially private algorithm for packing problems.
    Our algorithm not only achieves the optimal trade-off between the privacy parameter $\epsilon$ and the minimum supply requirement (up to logarithmic factors), but is also scalable in the sense that the running time is linear in the number of agents $n$.
    Previous algorithms either run in cubic time in $n$, or require a minimum supply per resource that is $\sqrt{n}$ times larger than the best possible. 
\end{abstract}

\section{Introduction}
\label{sec:intro}

Suppose a trusted principal has $b$ copies of some privacy-sensitive good, and there are $n$ agents interested in getting a copy of it.
Each agent has some value for receiving a copy of the good.
The principal would like to choose a subset of up to $b$ agents to receive the good so that sum of their values is maximized.\footnote{This is also known as social welfare maximization in the literature of mechanism design.}
However, one of the agents, Alice, gets paranoid that the others may be able to learn a lot of information about her value for the sensitive good.
In particular, here is a hypothetical scenario that Alice worries about.
Suppose the principal simply allocate to the $k$ agents with the largest values, with Alice being one of them.
Then, all the other $n-1$ agents may exchange information and figure out that only $b-1$ of them get a copy and, hence, Alice must also get one.
Further suppose that the $b$-th highest value among them is, say, \$1,000; 
they would also learn that Alice's value for the sensitive good is at least \$1,000.
\emph{Is there an allocation algorithm that addresses Alice's concerns without losing too much in the objective?}

This problem has been studied in a series of works in the last few years~\cite{HsuHRRW/2016/SICOMP, HsuHRW/2016/SODA, HuangZ/2018/SODA}.
More broadly, let us consider a general packing problem with $m$ resources and $n$ agents. 
Each agent demands a bundle consists of a certain amount of each resource, and has a certain value for getting it.
The goal is to pick a subset of the agents such that granting them the corresponding bundles approximately maximizes to sum of the values, subject to the supply constraints of the resources, while protecting the privacy of any individual agent.
This line of works focus on a specific notation of privacy called joint differential privacy.
In a nutshell, it requires that for any individual agent, say, Alice in our example, an adversary shall not be able to learn more than a negligible amount of information about the agent's private information, i.e., her value and demands, from the allocations and prices of the other agents.

How can packing algorithms guarantee joint differential privacy?
At a high level, such algorithms leave some amount of supply of each resource unallocated in some meticulous and randomized way, so that even if someone knows the allocations and prices of all other agents as in Alice's hypothetical scenario, he will not be able to learn with certainty whether Alice gets a copy of the item, or her value for it.
Further, to ensure that the objective is approximately optimal even with the unallocated supplies, all jointly private algorithms require the supply of each resource to be sufficiently large.
Hence, the literature measures how good a jointly private algorithm is by the trade-off between the privacy level, quantified by a parameter $\epsilon > 0$, and the minimum supply requirement, subject to getting an additive $\alpha n$ approximation.
To this end, Huang and Zhu \cite{HuangZ/2018/SODA} show that a supply of $\frac{\sqrt{m}}{\epsilon \alpha}$ per resource, up to logarithmic factors, is both sufficient and necessary. 

Another important consideration is the running time of the algorithms.
This is particularly relevant for (jointly) differentially private algorithms, since they generally require the size of the dataset, i.e., $n$, to be sufficiently large to achieve good approximation in the objective.
We argue that practical (jointly) differentially private algorithms must be scalable in the sense by Teng \cite{Teng/2016/FTTCS}, i.e., the running time shall be quasi-linear in $n$ or better.
However, the aforementioned algorithm by Huang and Zhu~\cite{HuangZ/2018/SODA}, which achieves the optimal $\tilde{O} \big( \frac{\sqrt{m}}{\epsilon \alpha} \big)$ supply requirement, is not scalable, as its running time depends cubically in $n$.
Neither are the earlier algorithms by Hsu et al.~\cite{HsuHRRW/2016/SICOMP, HsuHRW/2016/SODA}.
Although Huang and Zhu \cite{HuangZ/2018/SODA} also propose an alternative algorithm that is scalable, it requires a much larger supply of $\tilde{O} \big( \frac{\sqrt{m n}}{\epsilon \alpha} \big)$ per resource. 
The following question is explicitly left open~\cite{HuangZ/2018/SODA}:

\begin{quote}
    \em
    Are there jointly differentially private algorithm that are scalable and at the same time only require a minimum supply of $\tilde{O} \big( \frac{\sqrt{m}}{\epsilon \alpha} \big)$ per resource?
\end{quote}

\subsection{Our Contributions}

We introduce a jointly differentially private packing algorithm that answers the above open question affirmatively.
The main theorem of this paper is the following:

\begin{theorem}
    \label{thm:scalable-optimal-jdp-packing}
    There is an $(\epsilon, \delta)$-jointly differentially private algorithm such that: 
    \begin{enumerate}
        \item it returns with high probability a feasible packing solution that is optimal up to an $\alpha n$ additive factor, provided that the supply per resource is at least $\tilde{O} \big( \frac{\sqrt{m}}{\alpha \epsilon} \big)$;
        \item it stops in $O(n)$ time, omitting dependence in other parameters, with high probability.
    \end{enumerate}
\end{theorem}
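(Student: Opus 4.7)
The plan is to design the algorithm in the billboard model: privately compute a vector of per-resource dual prices $p \in \R^m$, post it publicly, and then let each agent independently decide in $O(m)$ time whether to purchase her bundle at those prices. Since each agent's decision depends only on the public billboard and her own private data, the joint-DP guarantee for the overall allocation reduces, via the billboard lemma, to ordinary $(\epsilon,\delta)$-DP of the price vector. This decouples privacy from scalability: I only need the pricing subroutine to be differentially private and cheap in per-agent work, after which the $n$ parallel agent decisions add an unavoidable $O(mn)$ term, i.e.\ $O(n)$ with $m$ suppressed.

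\textbf{Private and scalable dual solver.} For the price computation itself, I would adapt the multiplicative-weights / mirror-descent scheme used in \cite{HuangZ/2018/SODA}, which updates $m$ dual prices over $T = \tilde O(1/\alpha^2)$ iterations. That algorithm scans all $n$ agents at every iteration, producing the cubic bottleneck. My plan is to replace each full scan with a noisy gradient estimate computed on a small uniformly random mini-batch of $s$ agents, with Gaussian noise added to the aggregate. Two effects then combine in our favor: privacy amplification by subsampling shrinks each iteration's privacy cost by roughly the sampling rate $s/n$, while advanced composition across $T$ rounds contributes only an $\sqrt{T}$ blow-up. Setting the total privacy budget equal to $\epsilon$ back-solves the Gaussian scale, and translating this noise through the $O(\log m)$ bound on the prices yields an additive slack of $\tilde O(\sqrt{m}/(\epsilon\alpha))$ per resource, matching the target. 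The total work is $T\cdot s\cdot m$ for pricing plus $O(mn)$ for agent evaluations, and I would tune $s$ so that both pieces are $\tilde O(n)$.

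\textbf{Main obstacle and remaining steps.} The delicate point is simultaneously balancing three quantities: the mini-batch size $s$ must be small enough for scalability yet large enough that the sampling error does not dominate the added noise --- otherwise one recovers the $\tilde O(\sqrt{mn}/(\epsilon\alpha))$ supply bound of the previous scalable algorithm in \cite{HuangZ/2018/SODA}; the per-step $\ell_\infty$-sensitivity must be capped at $O(1)$ by truncating each agent's contribution; and the noise should be Gaussian rather than Laplace so that $T$ rounds compose with only a $\sqrt{T}$ factor, which is essential for the $\sqrt{m}$ (rather than $m$) dependence in the supply bound. Once these parameters are pinned down, the remainder is relatively standard: a union bound over the $m$ resources together with a martingale concentration argument across the $T$ iterations establishes feasibility with high probability; the noisy-gradient version of the MW/mirror-descent regret bound establishes the $\alpha n$ additive optimality; and the billboard lemma lifts $(\epsilon,\delta)$-DP of the prices to the claimed joint differential privacy of the final allocation.
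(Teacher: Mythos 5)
Your proposal takes a genuinely different route --- mini-batch dual updates with Gaussian noise and privacy amplification by subsampling --- and the obstacle you flag in passing, that the sampling error must not dominate, is in fact fatal to the scheme: it cannot be fixed by tuning the batch size $s$. Normalise so that each per-resource demand $\sum_i a_{ij}x_i$ has scale up to $n$. If you run $T$ rounds on batches of size $s$ with $T s = \tilde{O}(n)$ for scalability, then the mini-batch estimate of a demand coordinate has standard deviation about $n/\sqrt{s}$ per round, and even under the most optimistic assumption that these per-round errors average out over the run, the residual error in the averaged allocation is about $\frac{n/\sqrt{s}}{\sqrt{T}} = \frac{n}{\sqrt{Ts}} = \Omega(\sqrt{n})$. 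Approximate feasibility up to $O(\alpha b)$ then forces $b \ge \Omega(\sqrt{n}/\alpha)$, which, together with the $\tilde{O}(\sqrt{m}/(\alpha\epsilon))$ term you correctly obtain from the privacy noise, reproduces exactly the $\tilde{O}(\sqrt{mn}/(\alpha\epsilon))$ bound you set out to beat. This is not an artefact of one analysis; it is the statistical limit of estimating an $n$-scale aggregate from $O(n)$ total (noisy) evaluations, and no choice of $T$, $s$, or fixed step size escapes it. It is also, in essence, why the earlier scalable (and locally private) algorithm of Huang and Zhu is stuck at $\sqrt{mn}$.

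The paper avoids this by never subsampling: it computes the \emph{exact} subgradient over all $n$ agents in every round (zero sampling variance) and instead bounds the number of rounds $T$ by a quantity independent of $n$, namely $T = O\big(m\ln(m+1)/\alpha^2\big)$. The tools for that are nonuniform step sizes $\eta^t = \min\{\alpha/b,\,\alpha/\nabla_1 D(p^t),\dots,\alpha/\nabla_m D(p^t)\}$, which cap per-round progress and let a feasibility argument bound how often each coordinate of the gradient can be large, together with nonuniform noise scales $\sigma^t \propto 1/\sqrt{\eta^t}$ so that $\sum_t (\epsilon^t)^2$ stays $O(\epsilon^2/\ln(2/\delta))$ despite the data-dependent per-round budgets; this forces a martingale-based, adaptive version of the composition theorem rather than the off-the-shelf advanced composition plus subsampling amplification you invoke. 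The running time is then $T \cdot O(nm) = O(n)$ once $m$ and $\alpha$ are suppressed. If you want to salvage your route, you would have to eliminate the sampling variance, at which point the bottleneck becomes controlling $T$ and you are led back to the adaptive step-size and adaptive noise-scale idea.
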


The algorithm follows the same high-level framework as the previous ones, which we summarize below.
It maintains for each resource a price (per unit of the resource), which can be viewed as a dual variable that Lagrangianizes the corresponding resource constraint of the packing problem.
We shall imagine that the prices are posted on a public billboard for everyone to see, including the agents and the adversary.
Given the current prices, each agent gets the bundle if and only if her value is higher than the total price of the bundle.
The agents' decisions induce a total demand on each resource, which can be viewed as a subgradient for the dual prices w.r.t.\ some dual objective.
The algorithm then increases the prices of the overdemanded resources and decreases those of the underdemended ones.
This process repeats for a certain number of rounds; the final allocation is obtained by averaging over all the rounds.
Since the agents' allocation is coordinated only through the prices, it suffices to ensure that the sequence of prices is privacy-preserving.
This is formulated as the billboard lemma by Hsu et al.~\cite{HsuHRRW/2016/SICOMP}.

What are the main differences between our algorithm and the existing ones?
The previous non-scalable algorithms are essentially noisy versions of some existing optimization algorithms, including gradient descent~\cite{HsuHRW/2016/SODA} and multiplicative weight update~\cite{HuangZ/2018/SODA}, run on the dual space with a \emph{fix step size}.
They are not scalable because (1) the number of rounds needed by such algorithm generally depends on how large each coordinate of the subgradient could be (a.k.a., the width of the problem), which is roughly $n$ in our problem, and (2) the time needed to compute the subgradient in each round is linear in $n$.

The scalable algorithm by Huang and Zhu~\cite{HuangZ/2018/SODA}, on the other hand, is a noisy version of the online multiplicative weight update algorithm~(e.g., \cite{AgrawalD/2014/SODA}), which use the demand of a single agent as an estimator of the overall demand in each round, iterating through all agents once in a random order.
However, it does not seem plausible to avoid having an extra $\sqrt{n}$ factor in the minimum supply requirement using this approach, as it is not only jointly differentially private, but also locally private,\footnote{This is not explicitly stated in Huang and Zhu \cite{HuangZ/2018/SODA}. Nonetheless, it follows straightforwardly from the definition of the algorithm.} in the sense that it can be implemented in a way such that the agents add noises themselves so that even the algorithm never accesses any non-private version of the data.
The extra $\sqrt{n}$ factor is ubiquitous in the literature of locally private algorithms~\cite{BassilyS/2015/STOC, DuchiJW/2013/FOCS}.

In contrast, our algorithm is a noisy version of the multiplicative weight update algorithm run on the dual space with \emph{different step sizes}, which are optimized according on the scale of the subgradient in each round.
Intuitively, it chooses a small step size when the scale of the subgradient is large, to avoid dramatic changes in prices, and a large step size when the scale of the subgradient is small, to ensure a good enough progress. 
The idea of using different step sizes is widely used in non-private packing algorithms to get width-independent running time (e.g.,~\cite{KoufogiannakisY/2014/Algorithmica}).
Despite being standard in non-private packing, a direct combination of it and how the existing approaches add noises to the subgradients lead to suboptimal minimum supply requirement and/or super-linear running time in $n$.
Instead, we need to further use different noise scales in different rounds that are tailored to the scales of the subgradients and, by induction, the corresponding step sizes.
In a round where the scale of the subgradient is large (respectively, small) and the step size is small (respectively, large), the algorithm adds noises at a larger (respectively, smaller) scale to the subgradient and, in some sense, uses up less (respectively, more) of the privacy budget.
Setting noise scales adaptively over time introduces several technical difficulties which we will address in details in the technical sections.
To get the results in Theorem~\ref{thm:scalable-optimal-jdp-packing}, the noise scale is inversely proportional to the square root of the step size.
Hence, our algorithm is more precisely characterized as a noisy version of the multiplicative weight update algorithm run on the dual space with \emph{different step sizes and noise scales}, both of which are optimized according on the scale of the subgradient in each round.
Table~\ref{tab:summary} provides a brief comparison of our algorithm and the existing ones.

\begin{table}
    \renewcommand{\arraystretch}{1.3}
    \begin{tabular}{|c|c|c|c|}
        \hline
        Reference & Algorithm & Min Supply & Running Time (in $n$)
        \\
        \hline
        Hsu et al. \cite{HsuHRW/2016/SODA} & Dual GD & $\tilde{O} \big( \frac{m}{\alpha\epsilon} \big)$ & $O(n^3)$
        \\
        \hline
        \multirow{2}{*}{ Huang and Zhu \cite{HuangZ/2018/SODA}} & Dual MWU (fixed step size) & $\tilde{O} \big( \frac{\sqrt{m}}{\alpha\epsilon} \big)$ & $O(n^3)$
        \\
        \cline{2-4}
        & Dual Online MWU & $\tilde{O} \big( \frac{\sqrt{mn}}{\alpha\epsilon} \big)$ & $O(n)$
        \\
        \hline 
        \textbf{This paper} & Dual MWU (different step sizes) & $\tilde{O} \big( \frac{\sqrt{m}}{\alpha\epsilon} \big)$ & $O(n)$ 
        \\
        \hline
    \end{tabular}
    \medskip
    \caption{A comparison of the algorithm in this paper and those in previous works. $n$ and $m$ denote the number of agents and number of resources respectively. $\epsilon$ and $\alpha$ quantify the privacy and approximation guarantees respectively.}
    \label{tab:summary}
\end{table}
\subsection{Related Work}
The notion of differential privacy is introduced by Dwork et al.~\cite{DworkMNS/2006/TCC}. 
It has evolved through a long line of works to become a standard notion of privacy in theoretical computer science.
See Dwork and Roth~\cite{DworkR/2014/FTTCS} for a textbook introduction.
A particularly related line of works study differentially private algorithms for combinatorial optimization problems and mathematical programs.
McSherry and Talwar~\cite{McSherryT/2007/FOCS} introduce a generic (yet computationally inefficient) method called the exponential mechanism for privately solving optimization problems whose feasible set of output is independent on the dataset (e.g., $k$-means clustering). 
It is not applicable to the packing problem considered in this paper, since the set of feasible allocations crucially rely on the dataset. 
Hsu et al.~\cite{HsuRRU/2014/ICALP} study what linear programs can be solved in a differentially private manner.
Unfortunately, packing linear programs are not among the solvable ones~\cite{HsuHRRW/2016/SICOMP}.

Subsequently, Kearns et al.~\cite{KearnsPRU/2014/ITCS} introduce a relaxed notion called joint differential privacy.
It is still strong enough to provide provable privacy guarantees, but is also flexible enough to allow positive results for problems that cannot be solved under the original notion of differential privacy.
This relaxed notion is widely used not only in resource allocation problems~\cite{HsuHRRW/2016/SICOMP, HsuHRW/2016/SODA, HuangZ/2018/SODA} such as the packing problem considered in this paper, but also in coordinating large games~\cite{RogersR/2014/EC, CummingsKRW/2015/WINE, RogersRUW/2015/EC, KearnsPRU/2014/ITCS, CummingsLRRW/2016/ITCS}, privacy-presering learning~\cite{CummingsIL/2015/COLT, ShariffS/2018/NeurIPS}, privacy-preserving surveys~\cite{GhoshLRS/2014/EC}, privacy-preserving prediction markets~\cite{CummingsPV/2016/EC}, etc.

Recently, jointly differentially private resource allocation algorithms find further applications in regularizing strategic behaviors in the problem of learning reserve prices online in strategic environments~\cite{LiuHW/2018/NeurIPS}.

\section{Model}
\label{sec:model}

For any positive integer $\ell$, let $[\ell]$ denote the set of integers between $1$ and $\ell$, i.e., $\{1,2,\dots,\ell\}$.

\paragraph*{Packing}
Consider a packing problem with $n$ agents and $m$ resources. 
Each agent $i \in [n]$ demands a bundle of resources; let $a_{ij}$ denote her demand for each resource $j \in [m]$.
Further, agent $i$ has value $v_i$ for getting the bundle, and $0$ for not getting it.
We assume that $a_{ij}$'s and $v_i$'s are bounded between $0$ and $1$, which is standard in the literature of differential privacy.
For any agent $i \in [n]$, the demands $a_{ij}$'s and the value $v_i$ are her private data. 
Let $U = \{ (v, a_1, a_2, \dots, a_m) \in [0, 1]^{m+1} \}$ denote the data universe.
Let $\mathcal{D} \in U^n$ denote a dataset of $n$ agents. 
For any resource $j \in [m]$, let $b_j$ denote its supply.
The goal is then to choose a subset of the agents who get their demanded bundles, such that the sum of the values of the chosen agents is maximized, subject to that the total demand on each resource does not exceed the corresponding supply.

We remark that we can make two simplifying assumptions because the focal point of the jointly differentially private packing problem lies in whether the minimum supply is sufficiently large. 
First, we may assume without loss of generality (wlog) that the supplies of all resources are equal; otherwise, we may rescale the larger ones down to be equal to the minimum one. 
Second, we may focus on fractional solutions wlog; any fractional solution can be converted into an integral one using independent rounding with essentially the same objective and total demands on the resources, since we are in the large supply regime.

Therefore, the problem can be formulated as the following packing linear program:
\begin{align*}
\textrm{maximize} ~~ & \textstyle \sum_{i\in[n]} v_i x_i & &\\
\textrm{subject to} ~ & \textstyle \sum_{i\in[n]} a_{ij}x_i \le b & & \forall j \in [m] \\
& \textstyle 0\le x_{i} \le 1 & & \forall i \in [n]
\end{align*}

\paragraph*{Differential Privacy and Joint Differential Privacy}

Next, we formally define differential privacy and joint differential privacy with respect to the packing problem.
Two datasets $\mathcal{D}, \mathcal{D}' \in U^n$ are $i$-neighbors if they differ only in the data of the $i$-th agent, that is, if $\mathcal{D}_j = \mathcal{D}'_j$ for all $j \neq i$. 
We simply say that they are neighbors if they are $i$-neighbors for some $i \in [n]$.
Further, let $X_i = [0, 1]$ denote the set of feasible decision to each agent $i \in [n]$.
Let $X = X_1 \times X_2 \times \dots \times X_n$ denote the set of feasible outcomes, ignoring the supply constraints. 
The notion of differential privacy by Dwork et al.~\cite{DworkMNS/2006/TCC} requires that the allocation of all agents is chosen from similar distributions for any neighboring datasets in the following sense:

\begin{definition}[Differential Privacy] 
    A mechanism $\mathcal{M}: U^n \mapsto X$ is $(\epsilon,\delta)$-differentially private if for any neighbors $\mathcal{D}, \mathcal{D}'\in U^n$, and any subset of feasible allocations $S \subseteq X$:
    \[
	    \Pr \big[ \mathcal{M}(\mathcal{D}) \in S \big] \le \exp(\epsilon)\cdot \Pr \big[ \mathcal{M}(\mathcal{D}') \in S \big] + \delta 
        ~.
    \]
\end{definition}

The notion of joint differential privacy by Kearns et al. \cite{KearnsPRU/2014/ITCS}, on the other hand, allows the allocation to each agent to depend non-privately on her own data, so long the allocation to the other agents does not.
More precisely, the notion is defined as follows:

\begin{definition}[Joint Differential Privacy] 
    A mechanism $\mathcal{M}: U^n \mapsto X$ is $(\epsilon,\delta)$-jointly differentially private if for any $i \in [n]$, any $i$-neighbors $\mathcal{D}, \mathcal{D}' \in U^n$, and any subset of feasible allocations to agents other than $i$, $S_{-i} \subseteq X_{-i}$:
	\[
        \Pr \big[ \mathcal{M}(\mathcal{D})_{-i} \in S_{-i} \big] \leq \exp(\epsilon) \cdot \Pr \big[ \mathcal{M}(\mathcal{D}')_{-i} \in S_{-i} \big] + \delta 
        ~.
    \]
\end{definition}

The technical connections between the two notions are best explained by the following billboard lemma by Hsu et al. \cite{HsuHRRW/2016/SICOMP}.

\begin{lemma}[Billboard Lemma]
    \label{lem:billboard}
    Suppose $\mathcal{M} : U^n \mapsto Y$ is $(\epsilon, \delta)$-differentially private. 
    Then, for any collection of functions $f_i : U \times Y \mapsto X_i$, $i \in [n]$, the mechanism $\mathcal{M}'$ that allocates to each agent $i$ with $f_i \big(\mathcal{D}_i, \mathcal{M}(\mathcal{D})\big)$ is $(\epsilon, \delta)$-jointly differentially private.
\end{lemma}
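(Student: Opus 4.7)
The plan is to reduce the joint differential privacy guarantee to the ordinary differential privacy of $\mathcal{M}$ via the post-processing property of differential privacy. The key observation is that when we compare $i$-neighboring datasets $\mathcal{D}$ and $\mathcal{D}'$, the allocations to all agents $j \neq i$ are computed as $f_j(\mathcal{D}_j, \mathcal{M}(\mathcal{D}))$, where the first argument $\mathcal{D}_j = \mathcal{D}'_j$ is identical across the two datasets by the definition of $i$-neighbors. So the only source of difference between the two distributions on $X_{-i}$ comes through the shared signal $\mathcal{M}(\mathcal{D})$ vs $\mathcal{M}(\mathcal{D}')$, which we already know are close in the differential privacy sense.

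More concretely, first I would fix an arbitrary agent $i \in [n]$ and an arbitrary pair of $i$-neighbors $\mathcal{D}, \mathcal{D}' \in U^n$. Then I would define a (possibly randomized) post-processing map $g : Y \to X_{-i}$ by
\[
    g(y) \;=\; \bigl( f_j(\mathcal{D}_j, y) \bigr)_{j \neq i}.
\]
Crucially, the data $\mathcal{D}_j$ used inside $g$ for $j \neq i$ coincides with $\mathcal{D}'_j$, so $g$ is the same map regardless of whether we feed it $\mathcal{M}(\mathcal{D})$ or $\mathcal{M}(\mathcal{D}')$. By construction, the restriction $\mathcal{M}'(\mathcal{D})_{-i}$ is exactly $g(\mathcal{M}(\mathcal{D}))$, and likewise on $\mathcal{D}'$.

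Next I would invoke the standard post-processing lemma for $(\epsilon,\delta)$-differential privacy: for any (randomized) function $g$, if $\mathcal{M}$ is $(\epsilon,\delta)$-DP then so is $g \circ \mathcal{M}$. Applying this to any measurable $S_{-i} \subseteq X_{-i}$ yields
\[
    \Pr\bigl[\mathcal{M}'(\mathcal{D})_{-i} \in S_{-i}\bigr] \;=\; \Pr\bigl[g(\mathcal{M}(\mathcal{D})) \in S_{-i}\bigr] \;\le\; e^{\epsilon}\,\Pr\bigl[g(\mathcal{M}(\mathcal{D}')) \in S_{-i}\bigr] + \delta,
\]
which is exactly the definition of joint differential privacy since the right-hand side equals $e^{\epsilon}\Pr[\mathcal{M}'(\mathcal{D}')_{-i} \in S_{-i}] + \delta$.

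There is no real obstacle here; the only subtle point worth stating carefully is that the post-processing map $g$ is allowed to depend on the non-private data of agents other than $i$ precisely because that data is invariant across $i$-neighbors, so no information about the $i$-th record leaks through the choice of $g$ itself. If one wanted to be fully rigorous about randomization, I would note that $g$ may use fresh independent randomness (for the $f_j$'s if they are randomized), which the standard post-processing argument handles by conditioning on the internal coins of $g$ and integrating.
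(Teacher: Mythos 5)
The paper does not prove this lemma; it is stated as a cited result from Hsu et al.\ \cite{HsuHRRW/2016/SICOMP}. Your proof is correct and is the standard argument used there: fix $i$ and $i$-neighbors, observe that the map $y \mapsto (f_j(\mathcal{D}_j, y))_{j \neq i}$ is a single well-defined post-processing function precisely because $\mathcal{D}_j = \mathcal{D}'_j$ for $j \neq i$, and then invoke closure of $(\epsilon,\delta)$-differential privacy under post-processing.

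One small point worth being explicit about, since the lemma's statement quantifies over all $i$: the post-processing map $g$ you construct depends on the pair $(\mathcal{D}, \mathcal{D}')$ (through the fixed records $\mathcal{D}_j$ for $j \neq i$), and a different choice of $i$ or of neighboring pair gives a different $g$. That is perfectly fine, because the post-processing lemma is applied separately for each fixed pair of neighbors, and the resulting inequality is exactly the one that the definition of joint differential privacy asks for at that pair; no uniformity of $g$ across pairs is needed. You gesture at this with the remark that $g$ may depend on the non-private data of agents other than $i$, which is the right intuition, and your handling of randomized $f_j$'s by conditioning on fresh coins is also the standard way to make the post-processing step rigorous.
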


\section{Technical Preliminaries}
\label{sec:prelim}

\subsection{Lagrangian}

The Lagrangian of the packing linear program is:
\[
    \textstyle \max_{x \in X}\min_{p \in [0,\infty)^m} \sum_{i\in[n]}v_ix_i-\sum_{j\in[m]}p_j\big( \sum_{i\in[n]}a_{ij}x_i-b_j \big)
    ~.
\]

Let $L(x,p)$ denote the partial Lagrangian objective, that is:
\begin{align*}
L(x,p)
& \textstyle 
= \sum_{i\in[n]} v_i x_i - \sum_{j\in[m]} \big( \sum_{i\in[n]} a_{ij} x_i-b \big) p_j \\
&
 \textstyle  
= \sum_{j\in[m]} b p_j + \sum_{i\in[n]} \big( v_i-\sum_{j\in[m]} a_{ij} p_j \big) x_i
~.
\end{align*}

Let $D(p)=\max_{x\in[X]} L(x,p)$ denote the dual objective. 
We shall interpret $p_j$ as the unit price of resource $j$ for any $j\in[m]$. Given any prices $p$, an optimal solution $x^*(p)$ of the optimization problem $\max_{x\in[X]} L(x,p)$ is defined as, for any $i \in [n]$:
\[
    x^{*}_{i}(p) = 
    \begin{cases} 
        1, & \mbox{if $v_i - \sum_{j\in[m]} a_{ij} p_j \geq 0$;} \\
        0, & \mbox{otherwise.}
    \end{cases}
\]
%

We have the following envelope theorem (see, e.g., Afriat~\cite{Afriat/1971/SIAP}).
\begin{lemma}[Envolope Theorem]
	\label{lem:envolpe}
    Given any prices $p$, the derivatives of the Lagrangian objective w.r.t.\ the prices, i.e., $\nabla_p L(x,p)$, is a sub-gradient of $D(p)$ when $x = x^*(p)$.
\end{lemma}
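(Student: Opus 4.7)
The plan is to verify the standard subgradient inequality directly, using two observations: that $L(x,p)$ is affine (in particular, linear plus a constant) in $p$ for every fixed $x$, and that $D(p)$ is therefore a pointwise maximum of affine functions, hence convex in $p$.

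First I would fix arbitrary prices $p$ and $p'$, and use the definition of $D$ as a maximum to write
\[
D(p') = \max_{x \in X} L(x,p') \ge L(x^{*}(p), p')
~.
\]
Next, since $L(x,p) = \sum_{j} b p_j + \sum_i \big(v_i - \sum_j a_{ij}p_j\big) x_i$ is affine in $p$ for any fixed $x$, the first-order Taylor expansion is exact:
\[
L(x^{*}(p), p') = L(x^{*}(p), p) + \nabla_p L(x^{*}(p), p)^\top (p' - p)
~.
\]
By the definition of $x^{*}(p)$ as a maximizer of $L(\cdot, p)$ over $X$, we have $L(x^{*}(p), p) = D(p)$. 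Chaining these three facts yields
\[
D(p') \ge D(p) + \nabla_p L(x^{*}(p), p)^\top (p' - p)
~,
\]
which is precisely the defining inequality of a subgradient of the convex function $D$ at $p$, proving the claim.

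There is essentially no obstacle here: the argument is the textbook Danskin/envelope calculation specialized to the fact that $L$ happens to be exactly affine (not merely differentiable) in $p$, which is why no extra regularity assumption is needed and why the inequality holds globally in $p'$ rather than only infinitesimally. The only thing to double-check while writing the proof is the sign convention, i.e., that $D$ is convex in $p$ (so subgradients satisfy the $\ge$ inequality above), which follows immediately from the pointwise-max-of-affine representation.
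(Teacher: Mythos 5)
Your proof is correct and complete. The paper does not prove this lemma itself but simply cites it as a standard envelope/Danskin-type fact (referring to Afriat); your argument is exactly the textbook justification: $D$ is a pointwise maximum of functions that are affine in $p$, hence convex, and for the maximizing $x^*(p)$ the exact first-order expansion of $L(x^*(p),\cdot)$ at $p$ minorizes $D$ globally, which is the defining subgradient inequality. Your remark about the sign convention (convexity of $D$, hence the $\ge$ direction) is the one thing worth being explicit about, and you handled it.
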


\subsection{Truncated Laplacian Distributions}

The Laplacian distribution, given mean $\mu \in R$ and scale parameter $b > 0$, is a continuous distribution defined on $(-\infty, +\infty)$ such that the probability density of any $x \in \R$ is:
\[
    \frac{1}{2b} \exp \left( - \frac{|x - \mu|}{b} \right)
    ~.
\]
Let $\Lap(\mu, b)$ denote this distribution.
It has mean $\mu$ and variance $2b^2$.

Further, we will consider the truncated Laplacian distribution with support $[\mu-1+\alpha, \mu+1-\alpha]$, denoted as $\Lap_{1-\alpha}(\mu, b)$.
The probability density of any $x$ in the interval is proportional to that of $\Lap(\mu, b)$;
the density is $0$ if it is outside the interval.
$\Lap_{1-\alpha}(\mu, b)$ also has mean $\mu$, and variance $O(b^2)$.
(See below for a formal statement.)

Given any mean $\mu \in [-\alpha, \alpha]$ and target standard deviation (up to a constant factor) $0 < \sigma \le \alpha$, let $\dist(\mu, \sigma) = \Lap_{1-\alpha}(\mu, \sigma)$.
We use this notation to emphasize that the family of noise distributions, $\dist(\mu, \sigma)$'s, can be replaced by distributions other than the truncated Laplacian distributions, as long as they satisfy the following properties.
The proofs are deferred to Appendix~\ref{sec:miss-proof-truncated}.

\begin{lemma}
    \label{lem:noise-distributions-statistics}
    The noise distributions $\dist(\mu, \sigma)$'s satisfy that:
    \begin{enumerate}
        \item the mean of $\dist(\mu, \sigma)$ is $\mu$;
        \item the variance of $\dist(\mu, \sigma)$ is at most $2 \sigma^2$. 
    \end{enumerate}
\end{lemma}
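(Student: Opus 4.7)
The plan is to first reduce to the centered case $\mu = 0$ by a translation argument. Both the density of $\Lap(\mu, \sigma)$ and the truncation interval $[\mu - 1 + \alpha, \mu + 1 - \alpha]$ are symmetric around $\mu$, so the density of $\dist(\mu, \sigma)$ after recentering by $\mu$ coincides with that of $\dist(0, \sigma)$. Establishing both parts for $\mu = 0$ then yields the general statement.

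For the first claim, I would simply observe that the density $f$ of $\dist(0, \sigma)$ is an even function supported on a symmetric interval, so $\int x f(x)\, dx = 0$ and the mean of $\dist(\mu,\sigma)$ is $\mu$.

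For the variance bound, let $X \sim \Lap(0,\sigma)$ and let $Y$ be distributed as $X$ conditioned on the event $E = \{|X| \le 1 - \alpha\}$, which is precisely $\dist(0,\sigma)$. Since the mean is zero, the variance of $Y$ equals $\E[Y^2] = \E[X^2 \mid E]$. Writing
\[
\E[X^2] = \E[X^2 \mid E]\cdot\Pr[E] + \E[X^2 \mid \bar E]\cdot\Pr[\bar E],
\]
and noting that $X^2 \le (1-\alpha)^2$ on $E$ while $X^2 > (1-\alpha)^2$ on $\bar E$, we get $\E[X^2 \mid E] \le \E[X^2 \mid \bar E]$. Hence $\E[X^2 \mid E] \le \E[X^2]$, and since $\E[X^2] = 2\sigma^2$ for the full Laplacian, the truncated variance is at most $2\sigma^2$, as desired.

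There is no real obstacle: the argument is essentially the elementary fact that truncating a symmetric unimodal distribution to a symmetric interval around its center can only decrease the variance. The only thing to be careful about is to verify that the standard identity $\Var(\Lap(0,\sigma)) = 2\sigma^2$ is being used with the convention in the paper where $\sigma$ is the scale parameter rather than the standard deviation; the statement of the lemma (``at most $2\sigma^2$'') is consistent with that convention.
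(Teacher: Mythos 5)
Your proof is correct, and it takes a genuinely different route from the paper's. The paper computes the truncated variance explicitly: it writes out the normalizing constant $1 - \exp\!\bigl(\tfrac{-1+\alpha}{\sigma}\bigr)$, performs integration by parts twice to evaluate $\int_{\mu-1+\alpha}^{\mu+1-\alpha} (x-\mu)^2\,\tfrac{1}{2\sigma}\exp\!\bigl(-\tfrac{|x-\mu|}{\sigma}\bigr)\,dx$, and reads off from the resulting closed form that the extra term is nonpositive, so $\hat{\sigma}^2 \le 2\sigma^2$. You instead argue by conditioning: $\dist(0,\sigma)$ is $\Lap(0,\sigma)$ conditioned on $E=\{|X|\le 1-\alpha\}$, and since $X^2$ is bounded above by $(1-\alpha)^2$ on $E$ and below by $(1-\alpha)^2$ on $\bar E$, the decomposition $\E[X^2] = \E[X^2\mid E]\Pr[E] + \E[X^2\mid \bar E]\Pr[\bar E]$ immediately gives $\E[X^2\mid E]\le \E[X^2]=2\sigma^2$. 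Both arguments are sound; yours is shorter and more conceptual, exposing the general principle that symmetric truncation around the center of a symmetric distribution can only shrink the variance, whereas the paper's computation is more mechanical but yields the exact value of the truncated variance (which the analysis doesn't actually need). One cosmetic point: you may want to state explicitly that the recentering reduction is valid because the interval $[\mu-1+\alpha,\mu+1-\alpha]$ is $\mu$ plus the fixed symmetric interval $[-(1-\alpha),1-\alpha]$, so $\dist(\mu,\sigma) = \mu + \dist(0,\sigma)$ in distribution, which is exactly the translation-invariance you invoke.
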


\begin{lemma}
    \label{lem:noise-distributions-privacy}
    Suppose $-\alpha \le \mu_1, \mu_2 \le \alpha$, and $0 < \sigma_1, \sigma_2 \le \alpha$ satisfy that for some $0<\eta \leq \alpha$, $\sigma = \max \{\sigma_1, \sigma_2\}$, and $\delta > 0$: (1) $\big| \frac{1}{\sigma_1^2} - \frac{1}{\sigma_2^2} \big| \le \frac{\eta}{\alpha \sigma^2}$; (2) $\big| \mu_1 - \mu_2 \big| \le \eta$; and $\delta \le \frac{\eta \ln(2/\delta)}{\sigma}$.
    Then, for any $S \subseteq \R$, we have:
    \[
        \Pr_{z \sim \dist(\mu_1, \sigma_1)} \big[ z \in S \big] \le \exp\left(\frac{4 \eta \ln(2/\delta)}{\sigma}\right) \cdot \Pr_{z \sim \dist(\mu_2, \sigma_2)} \big[ z \in S \big] + \delta
        ~.
    \]
\end{lemma}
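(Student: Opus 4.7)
My plan is to follow the standard ``good event'' strategy for proving that a noise mechanism is $(\epsilon,\delta)$-differentially private: I will exhibit a set $G \subseteq \R$ such that (i) $\Pr_{z \sim \dist(\mu_1,\sigma_1)}[z \notin G] \le \delta$, and (ii) for every $x \in G$ the log-density-ratio $\log p_1(x)/p_2(x)$ is bounded by $\epsilon := 4\eta\ln(2/\delta)/\sigma$; the claim then follows by integrating $p_1$ over $S \cap G$ and absorbing $S \setminus G$ into the $\delta$ slack. Writing $\dist(\mu_k,\sigma_k) = \Lap_{1-\alpha}(\mu_k,\sigma_k)$ explicitly, the density is $p_k(x) = \frac{1}{2 Z_k \sigma_k}\exp(-|x-\mu_k|/\sigma_k)$ on $[\mu_k-(1-\alpha),\,\mu_k+(1-\alpha)]$, where $Z_k = 1-\exp(-(1-\alpha)/\sigma_k)$; since $\sigma_k \le \alpha \le 1-\alpha$, the constants $Z_k$ lie in $[1-e^{-1},1]$, and the truncated-Laplace tail satisfies $\Pr[|z-\mu_1| > t] \le 2 e^{-t/\sigma_1}$ for $t \le 1-\alpha$. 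I would take $G = \{x : |x-\mu_1| \le \sigma\ln(2/\delta)\}$, intersected with the support of $p_2$ in case the threshold overruns $1-\alpha$; the hypothesis $|\mu_1-\mu_2| \le \eta \le \alpha$ ensures $G$ lies inside the support of $p_2$, and the tail bound immediately yields (i).

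For (ii), the key identity is
\[
\log \frac{p_1(x)}{p_2(x)} \;=\; \log \frac{Z_2 \sigma_2}{Z_1 \sigma_1} \;+\; \frac{|x-\mu_2|-|x-\mu_1|}{\sigma_2} \;+\; |x-\mu_1|\,\Bigl(\frac{1}{\sigma_2}-\frac{1}{\sigma_1}\Bigr),
\]
and I would bound each term using one of the hypotheses. The middle term is at most $|\mu_1-\mu_2|/\sigma_2 \le \eta/\sigma$ by the reverse triangle inequality and assumption~(2). For the last term, the algebraic identity $1/\sigma_1 - 1/\sigma_2 = (1/\sigma_1^2-1/\sigma_2^2)/(1/\sigma_1+1/\sigma_2)$ combined with $1/\sigma_1+1/\sigma_2 \ge 2/\sigma$ and assumption~(1) gives $|1/\sigma_1 - 1/\sigma_2| \le \eta/(2\alpha\sigma)$; multiplying by $|x-\mu_1| \le \sigma\ln(2/\delta)$ and using $\sigma \le \alpha$ yields a bound of $\eta\ln(2/\delta)/(2\sigma)$. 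A short Taylor expansion based on assumption~(1) controls $|\log(\sigma_2/\sigma_1)|$ by $\eta/(2\sigma)$, and the factor $|\log(Z_2/Z_1)|$ coming from the truncation is easily absorbed by virtue of assumption~(3), $\delta \le \eta\ln(2/\delta)/\sigma$. Summing these contributions and rounding constants yields the target $\epsilon = 4\eta\ln(2/\delta)/\sigma$.

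The main obstacle will be the cross-term $|x-\mu_1|(1/\sigma_2-1/\sigma_1)$, which couples a potentially large deviation $|x-\mu_1| \sim \sigma\ln(2/\delta)$ with a reciprocal-scale difference that is stated by the hypothesis only in squared form; converting assumption~(1) into the needed first-order bound via the factorization trick and then exploiting $\sigma \le \alpha$ is what makes the whole argument close. A secondary subtlety is verifying that the normalization constants $Z_1, Z_2$ and the boundary behavior in the narrow region where one support extends beyond the other can both be folded into the $\delta$ budget; this is precisely the role of the third hypothesis $\delta \le \eta\ln(2/\delta)/\sigma$, which is otherwise a mystery condition until one tries to execute the last step of (ii) and the tail step of (i) carefully.
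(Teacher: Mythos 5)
Your proof is correct and takes a genuinely different route from the paper's.\ The paper factors the density ratio through an intermediate distribution $\dist(\mu_2,\sigma_1)$, splitting the argument into two sub-lemmas: a pure mean-shift comparison (same scale), which needs no good-event slack, and a pure scale-change comparison (same mean), which does.\ You instead compare $\dist(\mu_1,\sigma_1)$ to $\dist(\mu_2,\sigma_2)$ directly on a single good set $G=\{|x-\mu_1|\le\sigma\ln(2/\delta)\}$ and decompose the log-density ratio into a normalization term, a mean-shift term, and a reciprocal-scale cross-term.\ Both routes identify the same crux: hypothesis~(1) is a \emph{second-order} bound on $|1/\sigma_1^2-1/\sigma_2^2|$, and the factorization $1/\sigma_1-1/\sigma_2 = (1/\sigma_1^2-1/\sigma_2^2)/(1/\sigma_1+1/\sigma_2)$ converts it into the first-order estimate $|1/\sigma_1-1/\sigma_2|\le\eta/(2\alpha\sigma)$ needed to tame the cross-term $|x-\mu_1|(1/\sigma_2-1/\sigma_1)$.\ Your approach is arguably cleaner: it avoids the paper's manipulation of normalization constants through cumulative-probability ratios (the step that introduces their $(1+2\delta)$ factor) and hence gives the bound with more slack to spare.

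Two minor inaccuracies worth flagging, neither of which breaks the argument.\ First, your middle-term bound $(|x-\mu_2|-|x-\mu_1|)/\sigma_2 \le \eta/\sigma$ should read $\eta/\sigma_2$; since $\sigma=\max\{\sigma_1,\sigma_2\}$ this is the wrong direction of the inequality, but hypothesis~(1) with $\eta\le\alpha$ gives $\sigma_2\ge\sigma/\sqrt{2}$, so the true bound is $\sqrt{2}\,\eta/\sigma$, easily absorbed by the $\ln(2/\delta)$ margin.\ Second, you attribute the absorption of $|\log(Z_2/Z_1)|$ to hypothesis~(3), but that term is actually exponentially small: $|Z_2-Z_1|\le e^{-(1-\alpha)/\sigma}(1-\alpha)|1/\sigma_1-1/\sigma_2|$ and $Z_k\ge1-e^{-1}$ since $(1-\alpha)/\sigma_k\ge1$, which dwarfs any polynomial slack and needs no appeal to~(3).\ In the paper, hypothesis~(3) is used for a different purpose --- to fold a residual $(1+2\delta)$ factor from their normalization-ratio gymnastics into the exponential --- and in your cleaner decomposition that factor never arises, so~(3) is effectively vestigial in your version.
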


\section{Our Algorithm}
\label{sec:algo}

The algorithm follows a primal dual approach, running best response on the primal (i.e., allocation $\vec{x}$) and a noisy version of the multiplicative weight update (MWU) method on the dual (i.e., the prices $\vec{p}$).
Similar approaches are also used in the previous works that study jointly differentially private packing problem (e.g., \cite{HsuHRW/2016/SODA, HuangZ/2018/SODA}).
The new ingredients of our algorithm are the use of nonuniform step sizes as well as nonuniform noise scales, both of which are meticulously optimized to achieve both scalable running time and the optimal trade-off between privacy and the minimum supply requirement.
In contrast, all previous algorithms use a fix step size and a fix noise scale across different rounds.

\begin{algorithm}[t]
	\caption{Private Dual MWU with Optimized Step Sizes and Noise Scales}
	\label{alg:mw-optimized-step-sizes}
	\begin{algorithmic}[1]
        \STATEx \textbf{Input:~} 
        \begin{itemize}
            \item Dataset $\mathcal{D} \in U^n$, represented by $a_{ij}$'s and $v_i$'s; 
            \item Supply (per resource) $b$.
        \end{itemize}
        %
		%
        \STATEx \textbf{Assumptions:}
        \begin{itemize}
            \item $b \ge \tilde{O}\big(\tfrac{\sqrt{m}}{\alpha \epsilon}\big)$;
            \item We also assume $n \ge b$ as the problem is trivial otherwise.
        \end{itemize}
		\STATEx \textbf{Parameters:}
        \begin{itemize}
            \item Upper bound on dual prices $p_{\max} = \frac{2n}{b}$;
            \item Initial dual prices $\vec{p}^1$ such that $p^1_j = \frac{\pmax}{m+1}$ for any $j \in [m+1]$; 
	        \item Upper bound on the sum of step sizes $\etasum = \frac{\ln(m+1)}{\alpha b}$.
        \end{itemize}
        \FOR{$t = 1, 2, \dots,$ \textbf{until} $\sum_t \eta^t \geq \etasum$ (via a private counter)} 
            \STATE Let $\vec{x}^t = \vec{x}^*(\vec{p}^t)$, i.e., the best response to $\vec{p}^t$ from the primal viewpoint.
            \STATE Let $\nabla_j D(p^t) = b -\sum_{i \in [n]} a_{ij} x^t_i$ for all $j \in [m]$; let $\nabla_{m+1} D(p^t) = 0$.
            \STATE Let the \emph{step size} and the \emph{noise scale} be:
            \[
                \eta^t = \min \left\{ \frac{\alpha}{b}, \frac{\alpha}{\nabla_1 D(p^t)}, \frac{\alpha}{\nabla_2 D(p^t)}, \dots, \frac{\alpha}{\nabla_m D(p^t)} \right\}
                \quad,\quad
                \sigma^t = \frac{\sqrt{m\etasum \eta^t \ln (Tm/\delta) }}{\epsilon} 
                ~.
            \]
            \STATE Draw $\delta^t_j \sim \dist\big( \mu^t_j, \sigma^t \big)$ where $\mu^t_j = \eta^t \nabla_j D(p^t)$ for all $j \in [m]$; let $\delta^t_{m+1} = 0$.
            \STATE Let $\hat{p}^{t+1}_j=p^t_j \cdot \exp(-\delta^t_j)$ for all $j \in [m+1]$.
            \STATE Let $p^{t+1}$ be such that $p^{t+1}_j \propto \hat{p}^{t+1}_j$ and $\sum_{j \in [m+1]} p^{t+1}_j = \pmax$.
        \ENDFOR
        \STATE Let $T$ be the number of iterations in the for loop.
        \STATEx \textbf{Output:} $\bar{\vec{x}} = \tfrac{1}{\etasum} \sum_{t = 1}^T \eta^t \vec{x}^{t}$, where agent $i$ observes $\bar{x}_i$.
	\end{algorithmic}
\end{algorithm}

See Algorithm~\ref{alg:mw-optimized-step-sizes} for an exposition by pseudocode.

We now describe the algorithm in more details.
Let us first explain the basic dynamic of the dual MWU algorithm.
For technical reasons, we add a dummy resource with $0$ supply, and $0$ demands from all agents, and assume that the dual prices sum to a fix and sufficiently large number $\pmax$.
Starting from some initial guess $\vec{p}^1$ of the prices, say, with $\pmax$ uniformly distributed among the $m+1$ coordinates, the MWU algorithm repeatedly calculates:
\[
    \vec{x}^t = \vec{x}^* \big( \vec{p}^t \big) 
    \quad;\quad
    p^{t+1}_j \propto p^t_j \cdot \exp \left( - \eta^t \nabla_j D\big(p^t\big) \right) ~,~ \forall j \in [m+1] 
\]
where $\eta^t > 0$ is the step size of round $t$.
That is, the allocation $\vec{x}^t$ is the best response to $\vec{p}^t$, which induces a subgradient of the dual objective at $\vec{p}^t$ by Lemma~\ref{lem:envolpe}. 
Then, from $\vec{p}^t$ to $\vec{p}^{t+1}$, each coordinate $j \in [m+1]$ decreases exponentially by an amount proportional to corresponding subgradient, before they are rescaled to sum to $\pmax$.
With appropriate step sizes, standard analysis shows that the weighted average allocation across different rounds, where the weight of each round is its step size, converges to an optimal allocation.\footnote{The prices also converge, although this is not relevant for our analysis.}

To obtain the desired privacy guarantee, our algorithm updates the prices with zero-mean noises added to the subgradients.
Below we discuss the choice of step sizes and noise scales.

\bigskip

\noindent
\textbf{Step Sizes:~}
%
Some standard choices of step sizes include uniform step sizes, i.e., $\eta^t = \eta$, which is used in previous works on jointly private packing algorithms~\cite{HsuHRW/2016/SODA, HuangZ/2018/SODA}, decreasing step sizes, e.g., $\eta^t = \tfrac{\eta}{t}$, and step sizes inversely proportional to the magnitude of the subgradient, which are what our algorithm uses (see, e.g., Koufogiannakis and Young \cite{KoufogiannakisY/2014/Algorithmica}, for an application in the non-private packing problem).
Intuitively, our choice of step sizes ensure that at least one coordinate will be updated by an $\eta$ amount, which in turns lower bounds the amount of progress made in each round.
Our algorithm has one caveat, however, as it further caps the step size by an upper bound, which is set to $\frac{\alpha}{b}$ for technical reasons, so that the noise added in any single round does not affect the result by too much. 

\bigskip
\noindent
\textbf{Noise Scales:~}
%
Not surprisingly, our first attempt is to add a uniform amount of noise to every round like the previous jointly private algorithms in the literature~\cite{HsuHRRW/2016/SICOMP, HsuHRW/2016/SODA, HuangZ/2018/SODA}.
This, however, either requires the minimum supply to be much larger than the best possible, or is not scalable. 
To see why, let us first consider an overly-simplified argument of why uniform noise scales work in the previous algorithms.
Suppose the algorithm takes $T$ rounds in total. 
By a standard composition theorem of differential privacy~(see, e.g., Dwork and Roth \cite{DworkR/2014/FTTCS}), adding noises at a uniform scale $\tilde{O} \big( \frac{\sqrt{T}}{\epsilon} \big)$ is sufficient for achieving $(\epsilon, \delta)$-joint differential privacy.
Then, if the algorithm uses a uniform step size $\eta$, by the standard concentration bound, the cumulative noise summing over $T$ rounds is roughly $\tilde{O} \big( \frac{\eta T}{\epsilon} \big)$.
After averaging, this is essentially a fixed amount of noise $\tilde{O} \big( \frac{1}{\epsilon} \big)$ independent of $T$ and $\eta$!

With non-uniform step sizes, however, this is no longer true.
As a thought experiment, suppose there are $T' \ll T$ rounds that have large step sizes, say, all equal to $\eta$; the rest of the rounds can be omitted due to negligible step sizes.
Note that the uniform noise scale, i.e., $\tilde{O} \big( \frac{\sqrt{T}}{\epsilon} \big)$, is still determined by the total number of rounds, repeating the above calculation gives that the amount of noise after averaging is $\tilde{O} \big( \frac{\sqrt{T}}{\epsilon \sqrt{T'}} \big) \gg \tilde{O} \big( \frac{1}{\epsilon} \big)$. 

The lesson we learned from this though experiment is that the algorithm must choose non-uniform noise scales: 
smaller noise scales for more important rounds that have larger step sizes;
and larger noise scales for less important rounds that have smaller step sizes.
More precisely, we optimize the noise scale in each round $t$ to be inversely proportional to the square root of the step size, i.e, $\sqrt{\eta^t}$; this is derived form a Cauchy-Schwarz inequality to balance different aspects of the analysis.
(Note that $\delta^t_j$'s in Algorithm~\ref{alg:mw-optimized-step-sizes} denote the noises added to the subgradients \emph{multiplied by the corresponding step size} and, hence, its scale is proportional to $\sqrt{\eta^t}$, rather than inversely proportional to it.)

\bigskip

\noindent
\textbf{Privacy-preserving Stopping Criteria:~}
%
Finally, note that the stopping criteria of $\sum_t \eta^t \ge \etasum$ must be implemented approximately in a privacy-preserving manner as well.
This can be done by maintaining $\sum_t \eta^t$ using a standard technique called private counter (e.g., Chan et al.~\cite{ChanSS/2010/ICALP}, Dwork et al.~\cite{ DworkNPR/2010/STOC}).
For simplicity of exposition, we will omit this standard component and analyze the algorithm assuming the stopping criteria is implemented exactly.

\section{Utility and Time Complexity}
\label{sec:utility}

This section sketches the analysis of utility guarantees provided by Algorithm~\ref{alg:mw-optimized-step-sizes} and its time complexity, under the assumption that the supply is sufficiently large, i.e., $b \ge \tilde{O} \big( \frac{\sqrt{m}}{\alpha \epsilon} \big)$.

\subsection{No-regret Lemma}

We first introduce a technical lemma that will serve as the overarching tool in the analysis of the algorithm's approximation guarantees in terms of the objective and constraint violations, and its running time.
It states that if we compare the Lagrangian objective achieved by the sequence of $x^t$'s and $p^t$'s computed in the algorithm, and what could have be achieved by replacing $p^t$'s with an arbitrary but fix $p$, the difference can be bounded. 
In other words, the dual price sequence has no regret in the terminology of online learning.

\begin{lemma}
	\label{lemma:l(xt,pt)-l(xt,p)}
	For any $p$ such that $\|p\|_1=p_{max}$,  with high probability, we have:
	\begin{equation*}
        \textstyle
        \sum_{t=1}^T \eta^t \big( L(x^t,p^t) - L(x^t,p) \big) 
        \leq 
        \kl(p\|p^1) + \etasum \cdot O \big( \alpha n \big) 
        ~.
	\end{equation*}
\end{lemma}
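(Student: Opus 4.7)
The plan is to carry out a multiplicative-weights regret analysis with a KL-divergence potential, and then absorb the noise contributions using the primal-dual structure together with martingale concentration.

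First, set up the potential $\Phi^t := \kl(p\|p^t)$ and rewrite the update in normalized form $q^{t+1}_j = q^t_j \exp(-\delta^t_j)/Z^t$ with $q^t_j = p^t_j/\pmax$. Because $|\delta^t_j| \le |\mu^t_j| + (1-\alpha) \le 1$, the quadratic bound $\exp(-x) \le 1-x+x^2$ applies coordinate-wise, and substituting it into $-\ln Z^t$ yields the one-step drift $\Phi^t - \Phi^{t+1} \ge \langle p^t - p,\, \delta^t\rangle - \langle p^t,\, (\delta^t)^2\rangle$. Telescoping over $t$ and using $\Phi^{T+1} \ge 0$ gives $\sum_t \langle p^t - p, \delta^t\rangle \le \kl(p\|p^1) + \sum_t \langle p^t, (\delta^t)^2\rangle$. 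Writing $\delta^t = \mu^t + \zeta^t$ with $\mu^t_j = \eta^t\nabla_j D(p^t)$ and $\zeta^t$ zero-mean, the envelope theorem (Lemma~\ref{lem:envolpe}) combined with $\nabla_{m+1}D=0$ converts the mean part of the LHS into $\sum_t \eta^t(L(x^t,p^t)-L(x^t,p))$, reducing the task to showing $\sum_t \langle p^t,(\delta^t)^2\rangle + \big|\sum_t\langle p^t-p,\zeta^t\rangle\big| = \etasum\cdot O(\alpha n)$ with high probability.

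For the squared-noise sum, expand $(\delta^t_j)^2 = (\mu^t_j)^2 + 2\mu^t_j\zeta^t_j + (\zeta^t_j)^2$. The mean-squared piece uses $\eta^t|\nabla_j D(p^t)|\le \alpha$ from the step-size rule, so $(\mu^t_j)^2 \le \alpha\eta^t|\nabla_j D(p^t)|$. The key structural step is that $\langle p^t, |\nabla D(p^t)|\rangle = O(n)$: since the best response sets $x^t_i = 1$ only when $\sum_j a_{ij}p^t_j \le v_i \le 1$, swapping sums gives $\sum_j p^t_j\sum_i a_{ij} x^t_i \le n$, and combined with $\sum_j p^t_j\, b = \pmax b = 2n$ this bounds both the positive and negative parts of $\nabla D(p^t)$ weighted by $p^t$. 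Hence $\sum_t \langle p^t,(\mu^t)^2\rangle = O(\alpha n\etasum)$. The variance piece contributes $\sum_j p^t_j\cdot O((\sigma^t)^2) = O(\pmax(\sigma^t)^2)$ in expectation per round; summed, $\pmax\sum_t(\sigma^t)^2 = \tilde O(\pmax m\etasum^2/\epsilon^2)$ is exactly $O(\alpha n\etasum)$ when $b\ge \tilde O(\sqrt m/(\alpha\epsilon))$. A Bernstein-type tail bound lifts this to high probability, and the zero-mean cross term $\sum_t\langle p^t,\mu^t\zeta^t\rangle$ together with the martingale term $\sum_t\langle p^t-p,\zeta^t\rangle$ (with predictable quadratic variation $O(\pmax^2\sum_t(\sigma^t)^2)$, using $\sum_j (p^t_j-p_j)^2 \le O(\pmax^2)$) are handled by Freedman's inequality, producing deviations of the same order under the supply assumption.

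The main obstacle is adaptivity: the step sizes $\eta^t$, and hence the noise scales $\sigma^t$, are chosen from the history through $\nabla D(p^t)$, so one cannot apply term-by-term tail bounds with fixed scales. This forces the use of a martingale concentration inequality with predictable variance and a verification that the accumulated predictable quadratic variation matches the deterministic bound derived for the mean-squared terms, so that the single supply condition $b\ge \tilde O(\sqrt m/(\alpha\epsilon))$ suffices to absorb every error term into $\etasum\cdot O(\alpha n)$. A minor subtlety is that the number of iterations $T$ is itself random through the private counter; this is handled by a union bound over its deterministically bounded range.
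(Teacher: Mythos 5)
Your proposal is correct and follows essentially the same route as the paper's proof in Appendix~\ref{sec:miss-proof-utility}: split $\delta^t = \mu^t + \zeta^t$, use a KL-potential regret bound to reduce the claim to controlling $\sum_t \langle p^t, (\delta^t)^2\rangle$ plus the linear martingale term, bound the deterministic mean-squared part by $\etasum\cdot O(\alpha n)$ via the step-size cap $\eta^t|\nabla_j D(p^t)|\le\alpha$ and the structural bound $\sum_j p^t_j|\nabla_j D(p^t)| = O(n)$, absorb the variance part using the choice of $\sigma^t$ and the supply assumption, and handle the remaining zero-mean fluctuations with a Freedman-type martingale inequality with predictable quadratic variation (the paper's Theorem~\ref{thm:martingale-concentration}). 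The only cosmetic difference is in the one-step potential drift: you apply $e^{-x}\le 1-x+x^2$ directly to the partition function $Z^t$, while the paper uses the three-point KL identity plus the generalized Pythagorean theorem for the projection step and then bounds $\kl(p^t\|\hat p^{t+1})\le\sum_j p^t_j(\delta^t_j)^2$ via the same scalar inequality; both yield the identical bound $\langle\delta^t,p^t-p\rangle\le\kl(p\|p^t)-\kl(p\|p^{t+1})+\langle p^t,(\delta^t)^2\rangle$. You also correctly identify the need for a predictable-variance concentration inequality due to the adaptively chosen $\eta^t,\sigma^t$, which is exactly why the paper develops its extended Azuma--Hoeffding bound in Appendix~\ref{app:concentration}.
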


\begin{proof}[Proof Sketch of Lemma~\ref{lemma:l(xt,pt)-l(xt,p)}]
    We present a proof sketch of a weaker claim that the inequality holds in expectation, which captures the bottleneck of the analysis.
    Further showing the stronger claim in the lemma takes a (slightly nonstandard) concentration inequality for martingales.
    See Appendix~\ref{sec:miss-proof-utility} for a complete proof of the lemma.

    Readers familiar with this kind of analysis will find it standard and may directly jump to the end to verify that the contribution from the variance term can be bounded given the noise scales chosen in the algorithm.
    Fix any step $t$, we have the followings:
    \begin{align*}
        \E \big[ \eta^t \big( L(x^t, p^t) - L(x^t, p) \big) \big]
        & 
        = \E \big[ \langle \eta^t \nabla D(p^t), p^t - p \rangle \big]\\
        &
        = \E \big[ \langle \delta^t, p^t - p \rangle \big]\\
        &
        = \E \left[ \big\langle \ln \big( \tfrac{\hat{p}^{t+1}}{p^t} \big), p - p^t \big\rangle \right] \\
        &
        = \E \big[ \kl\big(p\|p^t\big) - \kl\big(p\|\hat{p}^{t+1}\big) + \kl\big(p^t\|\hat{p}^{t+1}\big) \big] \\[1ex]
        &
        \le \E \big[ \kl\big(p\|p^t\big) - \kl\big(p\|p^{t+1}\big) + \kl\big(p^t\|\hat{p}^{t+1}\big) - \kl\big(p^{t+1}\|\hat{p}^{t+1}\big) \big] \\[1ex]
        &
        \le \E \big[ \kl\big(p\|p^t\big) - \kl\big(p\|p^{t+1}\big) + \kl\big(p^t\|\hat{p}^{t+1}\big) \big]
        ~.
    \end{align*}

    We abuse notation and let $\ln\big(\frac{\hat{p}^{t+1}}{p^t}\big)$ denote a vector whose $j$-th coordinate is $\ln\big(\frac{\hat{p}^{t+1}_j}{p^t_j}\big)$ for any $j \in [m+1]$, in the 4th line of the above equation.
    The last two inequalities follow by the generalized Pythagorean theorem and the non-negativity of divergences, respectively.

    Summing over $t \in [T]$, the first two terms form a telescopic sum; 
    it is bounded by the first term on the right-hand-side of the inequality stated in the lemma.
    For the last term, we have:
    \begin{align*}
        \E \big[ \kl\big(p^t\|\hat{p}^{t+1}\big) \big]
        &
        \textstyle
        = \E \big[ \sum_{j=1}^{m+1} \big( p^t_j \ln\big(\tfrac{p^t_j}{\hat{p}^{t+1}_j}\big) - p^t_j + \hat{p}^{t+1}_j \big) \big] \\
        &
        \textstyle
        = \E \big[ \sum_{j=1}^{m+1} p^t_j \big( \delta^t_j - 1 + \exp\big(-\delta^t_j\big) \big) \big] \\
        &
        \textstyle
        \le \E \big[ \sum_{j=1}^{m+1} p^t_j \big( \delta^t_j \big)^2 \big]
        = \E \left[ \sum_{j=1}^{m+1} p^t_j \big( \E \big[ \delta^t_j \big]^2 + \Var \big[ \delta^t_j \big] \big) \right]
        ~.
    \end{align*}
    
    Note that the step sizes ensure $\E \big[ \delta^t_j \big] \le \alpha$.
    The first part on the right-hand-side sums to:
    \begin{align*}
        \textstyle
        \E \left[ \sum_{j=1}^{m+1} \sum_{t=1}^T p^t_j \cdot \E \big[ \delta^t_j \big]^2 \right]
        &    
        \textstyle
        \le \alpha \E \left[ \sum_{j=1}^{m+1} p^t_j \cdot \sum_{t=1}^T \big| \E \big[ \delta^t_j \big] \big| \right] \\
        &
        \textstyle
        \le \alpha \E \left[ \sum_{j=1}^{m+1} p^t_j \cdot \left( 2b \sum_{t=1}^T \eta^t - \sum_{t=1}^T \eta^t \E \big[ \delta^t_j \big] \right) \right]\\
        &
        \textstyle
        = \etasum \cdot O(\alpha n) - \alpha \E \left[ \sum_{j=1}^{m+1} p^t_j \sum_{t=1}^T \eta^t \E \big[ \delta^t_j \big] \right] \\
        &
        \textstyle
        \le \etasum \cdot O(\alpha n) - \alpha \E \left[ \sum_{j=1}^{m+1} p^t_j \sum_{t=1}^T \eta^t \nabla_j D(p^t) \right] \\
        &
        \textstyle
        = \etasum \cdot O(\alpha n) - \alpha \E \left[ \sum_{j=1}^{m+1} \eta^t \big( L(x^t, p^t) - \sum_{i=1}^n v_i x^t_i \big) \right] \\
        &
        \textstyle
        \le \etasum \cdot O(\alpha n) - \alpha \E \left[ \sum_{j=1}^{m+1} \eta^t L(x^t, p^t) \right]
        \le \etasum \cdot O(\alpha n) 
        ~.
    \end{align*}

    Finally, the contribution from the variance part sums to:
    \begin{align*}
        \textstyle
        \E \left[ \sum_{j=1}^{m+1} \sum_{t=1}^T p^t_j \cdot \Var \big[ \delta^t_j \big] \right]
        &
        \textstyle
        \le \E \left[ \sum_{j=1}^{m+1} \sum_{t=1}^T p^t_j \cdot 4 \big(\sigma^t\big)^2 \right]
        && 
        \textrm{(Lemma~\ref{lem:noise-distributions-privacy})} \\
        &
        \textstyle
        \le \tilde{O} \big( \sum_{t=1}^T \frac{\pmax m \etasum \eta^t}{\epsilon^2} \big) 
        &&
        \textrm{(Definition of $\sigma^t$'s)} \\
        &
        \textstyle
        \le \tilde{O} \big( \frac{\pmax m \etasum^2}{\epsilon} \big) \\
        &
        \textstyle
        = \etasum \cdot \tilde{O} \big( \frac{n}{\alpha \epsilon^2 b^2} \big) \le \etasum \cdot O(\alpha n)
        ~.
        &&
        \textrm{(Assumption on $b$)}
    \end{align*}

    Putting them together proves the lemma.
\end{proof}

By the choice of $\vec{p}^1$, $\pmax$, $\etasum$, the fact that $L(x^t, p^t) \ge \opt$ since $x^t$'s are best responses, and the definition of $\bar{x}$, we further have it in a simpler form as a corollary.

\begin{lemma}
    \label{lem:no-regret}
	For any $p$ such that $\|p\|_1=p_{max}$, with high probability, we have:
	\begin{equation*}
        \opt - L \big( \bar{x}, p \big) 
        \leq 
        O \big( \alpha n \big) 
        ~.
	\end{equation*}
\end{lemma}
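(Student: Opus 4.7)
The plan is to deduce this corollary directly from Lemma~\ref{lemma:l(xt,pt)-l(xt,p)} via three mechanical steps: lower bound $L(x^t, p^t)$ by $\opt$ using weak duality, linearize the right-hand side to produce $L(\bar{x}, p)$, and bound the remaining $\kl(p\|p^1)/\etasum$ term using the algorithm's choices of $\pmax$ and $\etasum$.

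First, I would note that because $x^t = x^*(p^t)$ is the best response to $p^t$, we have $L(x^t, p^t) = D(p^t)$. By weak duality for the packing LP (the added dummy $(m+1)$-st coordinate with zero supply and zero demand is inert in this argument), $D(p^t) \geq \opt$ for every $p^t \geq 0$, so $\sum_{t=1}^T \eta^t L(x^t, p^t) \geq \etasum \cdot \opt$. Next, $L(x, p)$ is affine in $x$ for any fixed $p$, so by the definition $\bar{x} = \frac{1}{\etasum}\sum_{t=1}^T \eta^t x^t$ we have $\sum_{t=1}^T \eta^t L(x^t, p) = \etasum \cdot L(\bar{x}, p)$. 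Combining these two observations with Lemma~\ref{lemma:l(xt,pt)-l(xt,p)} and dividing by $\etasum$ yields, with the same high probability,
\[
\opt - L(\bar{x}, p) \leq \frac{\kl(p\|p^1)}{\etasum} + O(\alpha n).
\]

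It remains to check that $\kl(p\|p^1)/\etasum = O(\alpha n)$. Since $p^1$ places equal mass $\pmax/(m+1)$ on each of the $m+1$ coordinates and $\|p\|_1 = \pmax$, each coordinate contributes $p_j \ln\big(p_j(m+1)/\pmax\big) \leq p_j \ln(m+1)$, giving $\kl(p\|p^1) \leq \pmax \ln(m+1)$. Substituting the algorithm's choices $\pmax = 2n/b$ and $\etasum = \ln(m+1)/(\alpha b)$, the ratio collapses to exactly $2\alpha n$. There is no real obstacle here: the substantive technical work is already absorbed into Lemma~\ref{lemma:l(xt,pt)-l(xt,p)}, and the only thing worth verifying is that the algorithm's parameter choices are deliberately calibrated so that the initial KL distance, after normalization by $\etasum$, matches the target $O(\alpha n)$ accuracy.
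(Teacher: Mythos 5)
Your proposal is correct and matches the paper's intended derivation, which the paper compresses into one sentence (using the parameter choices, the fact that $L(x^t,p^t) \ge \opt$ from best responses, and the definition of $\bar{x}$). You simply make the weak-duality and linearity steps explicit and verify the $\kl(p\|p^1)/\etasum = 2\alpha n$ calculation, all of which check out.
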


\subsection{Approximate Optimality}

We now argue that the algorithm gets an objective that is optimal up to an $O(\alpha n)$ additive factor, with the understanding that further improving it to an $\alpha n$ factor does not affect any of the asymptotic bound.
To do so, simply let $p$ be such that the first $m$ coordinates are all equal to $0$, and the last dummy coordinate equals $\pmax$.
Then, we have that $L(\bar{x}, p) = \sum_{i=1}^n v_i \bar{x}_i = \alg$.
The claim then follows from Lemma~\ref{lem:no-regret}.

\subsection{Feasibility}
\label{subsec:feasibility}

Next, we argue that the algorithm provides an allocation $\bar{x}$ that is approximately feasible in the sense that the total demand for each resource is at most $\big(1 + O(\alpha) \big) b$.
To convert it into an exactly feasible solution as stated in Theorem~\ref{thm:scalable-optimal-jdp-packing}, we can simply scale the allocation down by a $1 + O(\alpha)$ factor, at the cost of further reducing the objective by at most $O(\alpha n)$.

Suppose resource $j^*$ has the largest demand.
Let $s = \sum_{i=1}^n a_{ij^*} \bar{x}_i - b$ the gap between this demand and the supply $b$.
Let $p$ be such that all but the $j^*$-th coordinate are equal to $0$, and the $j^*$-th coordinate is equal to $\pmax$.
Then, we have:
\[
    L \big( \bar{x}, p \big) = \alg - \pmax s \le \left(1 + \tfrac{s}{b} \right) \opt - \pmax s
    ~.
\]

By Lemma~\ref{lem:no-regret}, we get that:
$
    \left( \pmax - \tfrac{\opt}{b} \right) s \le O \big( \alpha n \big) 
    ~.
$

Putting together with our choice of $\pmax = \frac{2n}{b}$, and that $\opt \le n$, we get $s \le O \big( \alpha b \big)$.

\subsection{Time Complexity}
\label{sec:time}

We next show that the number of iterations of the for loop is upper bounded by a polynomial of the parameters other than $n$.
More precisely, we show that $T \le O \big( \frac{m \ln(m+1)}{\alpha^2} \big)$.

To do so, we classify the iterations into $m+1$ types, according to how the step size is chosen.
If the step size is $\eta^t = \frac{\alpha}{b}$, we call it type $0$.
Otherwise, if the step size is $\eta^t = \frac{\alpha}{\nabla_j D(p^t)}$ for some $j \in [m]$, we call it type $j$.

First, the number of type $0$ iterations is at most $\frac{\ln(m+1)}{\alpha^2}$, before the sum of the step sizes exceeds $\etasum = \frac{\ln(m+1)}{\alpha b}$.

Next, consider the iterations of type $j$, for each resource $j \in [m]$.
In each iteration $t$ of type $j$, we have that $\eta^t \nabla_j D(p^t) = \alpha$.
We argue that there cannot be too many such iterations, because $\sum_{t=1}^T \eta^t \nabla_j D(p^t)$ is upper bounded as a result of the approximate feasibility of $\bar{x}$, which follows from Lemma~\ref{lem:no-regret} as argued in the Section~\ref{subsec:feasibility}.

On the one hand, we have:
\[
    \textstyle
\sum_{t=1}^T \eta^t \nabla_j D(p^t) = \sum_{t=1}^T \eta^t \big( \sum_{i=1}^n a_{ij} \bar{x}_i - b \big) \le \etasum \cdot O(\alpha b) = \tilde{O} \big( 1 \big)
    ~.
\]

On the other hand, suppose there are $T_j$ such iterations, we have:
\begin{align*}
    \textstyle
    \sum_{t=1}^T \eta^t \nabla_j D(p^t)
    & 
    \textstyle
    = \sum_{t \in T_j} \alpha + \sum_{t \notin T_j} \eta^t \nabla_j D(p^t) \\
   &
    \textstyle
    \ge \sum_{t \in T_j} \alpha + \sum_{t \notin T_j} \eta^t (-b) 
    \textstyle
    \ge \alpha |T_j| - 2 b \etasum = \alpha |T_j| - \frac{2 \ln(m+1)}{\alpha}
    ~.
\end{align*}

Putting together gives $|T_j| \le \frac{3 \ln(m+1)}{\alpha^2}$.

\section{Privacy: Proof Sketch}
\label{sec:privacy}

This section sketches the proof that Algorithm~\ref{alg:mw-optimized-step-sizes} is $(\epsilon, \delta)$-jointly differentially private.
See Appendix~\ref{app:privacy-full} for a complete argument.

First, consider any fix step $t$.
Suppose $\mathcal{D}$ and $\widetilde{\mathcal{D}}$ are two neighboring datasets.
Further suppose the random realizations of the first $t-1$ iterations are such that $x^1, x^2, \dots, x^{t-1}$ and $p^1, p^2, \dots, p^{t-1}$ are identical on $\mathcal{D}$ and $\widetilde{\mathcal{D}}$.
Let us consider the privacy of step $t$ alone.

In fact, let us focus on a resource $j \in [m]$.
How does the distribution from which $\delta^t_j$ is drawn differ on the two datasets?
Suppose the step sizes are $\eta$ and $\tilde{\eta}$ respectively.
Then, by the choice of step sizes in the algorithm and that $\nabla_j D(p^t)$ differs by at most $1$, we have:
\[
    \big| \tfrac{1}{\eta} - \tfrac{1}{\tilde{\eta}} \big| \le \tfrac{1}{\alpha}
    ~.
\]

Suppose the standard deviations are $\sigma$ and $\tilde{\sigma}$, and the means are $\mu$ and $\tilde{\mu}$, respectively on the two datasets.
They satisfy that:
\begin{align*}
    \big| \tfrac{1}{\sigma^2} - \tfrac{1}{\tilde{\sigma}^2} \big| 
    & 
    = \tfrac{\epsilon^2}{m \etasum \ln(T/\delta)} \big| \tfrac{1}{\eta} - \tfrac{1}{\tilde{\eta}} \big|
    \le \tfrac{\epsilon^2}{\alpha m \etasum \ln(Tm/\delta)} = \tfrac{\eta}{\alpha \sigma^2}
    ~, \\
    \big| \mu - \tilde{\mu} \big| 
    &
    \le \eta \big|  \tfrac{\mu}{\eta} - \tfrac{\tilde{\mu}}{\tilde{\eta}} \big| + \tilde{\mu} \eta \big| \tfrac{1}{\eta} - \tfrac{1}{\tilde{\eta}} \big| \le \eta + \alpha \eta \alpha^{-1} = 2 \eta 
    ~,
\end{align*}
where $\big|  \tfrac{\mu}{\eta} - \tfrac{\tilde{\mu}}{\tilde{\eta}} \big| \le 1$ because it measures the difference in $\nabla_j D(p^t)$.

Also it's easy to verify that $\eta^t\leq \alpha$. Therefore, by Lemma~\ref{lem:noise-distributions-privacy}, we have that the update from price $p^t_j$ to $\tilde{p}^{t+1}_j$ posted in round $t$ for resource $j$ is $(\epsilon^t, \frac{\delta}{Tm})$-differentially private for:
\[
    \epsilon^t = O \big( \tfrac{\eta^t}{\sigma^t} \big) = \tfrac{\epsilon \sqrt{\eta^t} \ln(Tm/\delta)}{\sqrt{m \etasum}}
    ~.
\]

Then, intuitively by the composition theorem (see, e.g., Dwork and Roth \cite{DworkR/2014/FTTCS}), the price sequence is $(\epsilon, \delta)$-differentially private because: 
\[
    \sum_{t=1}^T \sum_{j=1}^m \big( \epsilon^t \big)^2 = O \big( \tfrac{\epsilon^2 \eta^t \ln^2(Tm/\delta)}{m \etasum} \big) = O\big(\tfrac{\epsilon^2}{\ln(2/\delta)}\big)
    ~.
\]

However, note that the standard statement of the composition theorem does not directly apply here since the privacy budget $\epsilon^t$ in each step is chosen adaptively. 
Fortunately, the underlying argument still goes through. 
Again, see Appendix~\ref{app:privacy-full} for details.

Finally, the privacy guarantee of Algorithm~\ref{alg:mw-optimized-step-sizes} follows by the billboard lemma (Lemma~\ref{lem:billboard}).

\bibliographystyle{plain}
\bibliography{jdp}
\appendix
\section{Extended Azuma-Hoeffding Inequality}
\label{app:concentration}
In this section we prove an extension of the Azuma-Hoeffding inequality for martingales. 

\begin{theorem}
	\label{thm:martingale-concentration}
	Suppose $X^1, X^2, \dots, X^T$ is a martingale w.r.t.\ $r^1, r^2, \dots, r^T$ such that $|X^t - X^{t-1}| \le 1$ for any $t \in [T]$ (with $X^0 = 0$), and that:
	\[
	\sum_{t = 1}^T \Var \big[ X^t ~\big|~ r^{1:t-1} \big] \le \sigma^2
	\]
	Then, we have that:
	\begin{equation}
	\label{eqn:martingale-bounded-total-variance}
	\Pr \big[ X^T \ge \alpha \big] \le \exp \big( - \frac{\alpha^2}{4 \sigma^2} \big) 
	~.
	\end{equation}
\end{theorem}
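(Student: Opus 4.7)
The plan is to follow the standard Chernoff / exponential-moment method for martingales, tailored to exploit the bounded total conditional variance rather than only the individual bounds $|X^t - X^{t-1}| \le 1$. Let $D^t = X^t - X^{t-1}$ denote the martingale differences, so $\E[D^t \mid r^{1:t-1}] = 0$, $|D^t| \le 1$, and $\Var[D^t \mid r^{1:t-1}] = \Var[X^t \mid r^{1:t-1}]$.

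The first step is a per-round moment-generating-function estimate. Using the elementary inequality $e^x \le 1 + x + x^2$ valid for all $|x| \le 1$, and hence for $x = \lambda D^t$ whenever $\lambda \in (0,1]$, taking conditional expectations and applying the martingale property gives
\[
    \E\bigl[ e^{\lambda D^t} \bigm| r^{1:t-1} \bigr] \;\le\; 1 + \lambda^2 \Var\bigl[ D^t \bigm| r^{1:t-1} \bigr] \;\le\; \exp\bigl( \lambda^2 \Var[D^t \mid r^{1:t-1}] \bigr).
\]

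The second step is to telescope these per-round bounds via the tower property. A one-line induction writing $e^{\lambda X^T} = e^{\lambda X^{T-1}} \cdot e^{\lambda D^T}$ and peeling off one factor at a time yields
\[
    \E\bigl[ e^{\lambda X^T} \bigr] \;\le\; \E\Bigl[ \exp\bigl( \lambda^2 \textstyle\sum_{t=1}^T \Var[D^t \mid r^{1:t-1}] \bigr) \Bigr] \;\le\; \exp\bigl( \lambda^2 \sigma^2 \bigr),
\]
where the final inequality uses the hypothesis $\sum_t \Var[X^t \mid r^{1:t-1}] \le \sigma^2$ interpreted as an almost-sure bound. Markov's inequality in exponential form then gives $\Pr[X^T \ge \alpha] \le \exp(\lambda^2 \sigma^2 - \lambda \alpha)$, and the unconstrained minimizer $\lambda^\star = \alpha/(2\sigma^2)$ delivers the target $\exp(-\alpha^2/(4\sigma^2))$.

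The point that requires care — and the one place I expect the proof to need a small extra argument — is that the per-round bound $e^x \le 1 + x + x^2$ only holds for $|x|\le 1$, so the Chernoff optimization is constrained to $\lambda \in (0,1]$. In the regime $\alpha \le 2\sigma^2$ the optimum $\lambda^\star \le 1$ is admissible and the conclusion follows verbatim; this is already the only regime that matters for the applications in Section~\ref{sec:utility}, where $\sigma^2$ is large compared to the deviation being controlled. For completeness, to handle $\alpha > 2\sigma^2$ as well, I would replace the per-round estimate with the sharper Bennett-type bound $\E[e^{\lambda D^t}\mid r^{1:t-1}] \le \exp\bigl(\Var[D^t\mid r^{1:t-1}]\cdot(e^\lambda-\lambda-1)\bigr)$, which is valid for all $\lambda > 0$; optimizing the resulting Chernoff exponent over $\lambda$ produces a Bennett/Freedman inequality that in turn implies the stated form in all parameter regimes. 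Everything else — the martingale identities, the tower-property induction, and the exponential Markov step — is entirely routine.
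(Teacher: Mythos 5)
Your proof is correct and follows essentially the same route as the paper's: the per-round MGF estimate via $e^x \le 1+x+x^2$ valid for $|x|\le 1$, telescoping through the tower property (formalized in the paper as a backward induction in Lemma~\ref{lem:martingale-induction} carrying the running sum $\sigma^2 - \sum_{t\le T-k}\Var[X^t\mid r^{1:t-1}]$, which is exactly the bookkeeping your ``one-line induction'' implicitly requires, since the conditional variance $\Var[D^t\mid r^{1:t-1}]$ is a random variable and cannot simply be pulled out of the expectation), then Markov's inequality and the Chernoff optimization at $\lambda^\star = \alpha/(2\sigma^2)$. You are in fact slightly more careful than the paper: the paper also restricts to $0 < \lambda < 1$ in Lemma~\ref{lem:martingale-induction} but then plugs in $\lambda^\star=\alpha/(2\sigma^2)$ without verifying $\lambda^\star<1$, so your explicit flag of this constraint, and your proposed Bennett-type per-round bound $\E[e^{\lambda D^t}\mid r^{1:t-1}]\le\exp(\Var[D^t\mid r^{1:t-1}](e^\lambda-\lambda-1))$ for the regime $\alpha>2\sigma^2$, genuinely tighten the argument.
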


\begin{proof}
	For some parameter $0 < \lambda < 1$ to be determined in the analysis, we have the following sequence of inequalities:
	\begin{align*}
	\Pr \big[ X^T \ge \alpha \big] 
	& 
	= \Pr \big[ \exp \big( \lambda (X^T - \alpha) \big) \ge 1 \big] \\
	& \le \E \big[ \exp \big( \lambda (X^T - \alpha) \big) \big] \\
	& = \exp \big( - \lambda \alpha \big) \cdot \E \big[ \exp \big( \lambda X^T \big) \big]
	~.
	\end{align*}
	
	The main technical ingredient is an upper bound on the second part of the RHS, which we shall prove separately as Lemma~\ref{lem:martingale-induction}.
	By this lemma, we further get that:
	\begin{align*}
	\Pr \big[ X^T \ge \alpha \big] 
	& \le \exp \big( - \lambda \alpha \big) \cdot \exp \big( \lambda^2 \sigma^2 \big) \\
	& = \exp \big( - \lambda \alpha + \lambda^2 \sigma^2 \big) 
	~.
	\end{align*}
	
	Let $\lambda = \frac{\alpha}{2 \sigma^2}$ to minimize the exponent, we get that:
	\[
	\Pr \big[ X^T \ge \alpha \big] \le \exp \bigg( - \frac{\alpha^2}{4\sigma^2} \bigg)
	~.
	\]
\end{proof}

\begin{lemma}
	\label{lem:martingale-induction}
	For any martingale as defined in Theorem~\ref{thm:martingale-concentration}, and any $0 < \lambda < 1$, we have:
	\[
	\E \big[ \exp \big( \lambda X^T \big) \big] \le \exp \big( \lambda^2 \sigma^2 \big) 
	~.
	\]
\end{lemma}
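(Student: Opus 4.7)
The plan is to apply the standard exponential moment / supermartingale method. I would introduce the auxiliary process
\[
    Y^t := \lambda X^t - \lambda^2 \sum_{s=1}^t V^s,
    \quad \textrm{where} \quad
    V^s := \Var\big[X^s - X^{s-1} \mid r^{1:s-1}\big] = \Var\big[X^s \mid r^{1:s-1}\big],
\]
and the second equality uses that $X^{s-1}$ is $r^{1:s-1}$-measurable. The target is to show that $\exp(Y^t)$ is a supermartingale with respect to the filtration generated by $r^{1:t}$. Granting this, $\E[\exp(Y^T)] \le \exp(Y^0) = 1$. Under the hypothesis $\sum_{s=1}^T V^s \le \sigma^2$ (which I would interpret as holding almost surely), we have $Y^T \ge \lambda X^T - \lambda^2 \sigma^2$, so $\E[\exp(\lambda X^T)] \le \exp(\lambda^2 \sigma^2) \cdot \E[\exp(Y^T)] \le \exp(\lambda^2 \sigma^2)$, which is exactly the desired conclusion.

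To verify the supermartingale property, I would condition on $r^{1:t-1}$ and pull out the $r^{1:t-1}$-measurable factor $\exp(Y^{t-1} - \lambda^2 V^t)$, reducing the task to the one-step bound
\[
    \E \big[ \exp \big( \lambda (X^t - X^{t-1}) \big) \mid r^{1:t-1} \big] \le \exp(\lambda^2 V^t).
\]
The martingale difference $Z := X^t - X^{t-1}$ has conditional mean zero and $|Z| \le 1$. Expanding $\exp(\lambda Z)$ as a power series and taking conditional expectation annihilates the linear term; using $|Z|^k \le Z^2$ for $k \ge 2$ yields $\E[\exp(\lambda Z) \mid r^{1:t-1}] \le 1 + V^t \cdot \sum_{k \ge 2} \lambda^k / k!$. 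For $0 < \lambda < 1$ I can bound the tail sum by $\lambda^2 \sum_{k \ge 2} 1/k! = \lambda^2 (e - 2) \le \lambda^2$, and applying $1 + x \le e^x$ completes the step.

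The main obstacle here is mild but essential: one must pin down the constant in front of $\lambda^2 V^t$ to be exactly $1$, rather than some absolute constant strictly larger, since this is what drives the clean $\alpha^2/(4\sigma^2)$ exponent in Theorem~\ref{thm:martingale-concentration}. The restriction $0 < \lambda < 1$ is precisely what makes the tail series sum cleanly to at most $\lambda^2$; without it one would carry an inflated $(e^\lambda - 1 - \lambda)/\lambda^2$ factor. This constraint is harmless for the companion theorem, since the optimal choice $\lambda = \alpha/(2\sigma^2)$ already lies below $1$ in the only regime where the concentration bound is nontrivial.
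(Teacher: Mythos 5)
Your proof is correct and uses essentially the same argument as the paper: the paper establishes the identical one-step bound $\E\big[\exp(\lambda(X^t - X^{t-1})) \mid r^{1:t-1}\big] \le \exp\big(\lambda^2 \Var[X^t \mid r^{1:t-1}]\big)$ via $e^x \le 1 + x + x^2$ (valid for $|x| \le 1$) followed by $1 + y \le e^y$, and then chains it by backward induction on the number of remaining rounds. Your forward supermartingale packaging of the chaining, and your Taylor-series route to the same one-step estimate, are cosmetic reorganizations of the paper's calculation.
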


\begin{proof}
	We will prove by induction that, for any $0 \le k \le T$, for any realization of the random bits $r^{1:T-k}$, the following holds over the randomness of the remaining random bits $r^{T-k+1:T}$:
	\[
	\E_{r^{T-k+1:T}} \big[ \exp \big( \lambda ( X^T - X^{T-k} ) \big) ~\big|~ r^{1:T-k} \big] \le \exp \bigg( \lambda^2 \bigg( \sigma^2 - \sum_{t=1}^{T-k} \Var \big[ X^t ~\big|~ r^{1:t-1} \big] \bigg) \bigg)
	~.
	\]
	
	Then, the lemma follows by the case when $k = T$.
	
	The base case of $k = 0$ is equivalent to Eqn.~\eqref{eqn:martingale-bounded-total-variance} in the Theorem statement.
	
	Next, suppose the theorem holds for some $0 \le k < T$.
	The case of $k+1$ then follows by the following sequence of inequalities:
	\begin{align*}
	\E_{r^{T-k:T}} & \big[ \exp \big( \lambda ( X^T - X^{T-k-1} ) \big) ~\big|~ r^{1:T-k-1} \big] \\[2ex]
	&
	= \E_{r^{T-k}} \bigg[ \exp \bigg( \lambda ( X^{T-k} - X^{T-k-1} ) \bigg) \cdot \E_{r^{T-k+1:T}} \big[ \exp \big( \lambda (X^T - X^{T-k}) \big) ~\big|~ r^{1:T-k} \big] ~\bigg|~ r^{1:T-k-1} \bigg] \\
	&
	\le \E_{r^{T-k}} \bigg[ \exp \bigg( \lambda ( X^{T-k} - X^{T-k-1} ) \bigg) \cdot \exp \bigg( \lambda^2 \bigg( \sigma^2 - \sum_{t=1}^{T-k} \Var \big[ X^t ~\big|~ r^{1:t-1} \big] \bigg) \bigg) ~\bigg|~ r^{1:T-k-1} \bigg] \\
	& 
	\le \E_{r^{T-k}} \bigg[ \exp \bigg( \lambda ( X^{T-k} - X^{T-k-1} ) \bigg) ~\bigg|~ r^{1:T-k-1} \bigg] \cdot \exp \bigg( \lambda^2 \bigg( \sigma^2 - \sum_{t=1}^{T-k} \Var \big[ X^t ~\big|~ r^{1:t-1} \big] \bigg) \bigg) \\
	&
	\le \E_{r^{T-k}} \bigg[ \bigg( 1 + \lambda (X^{T-k} - X^{T-k-1}) + \lambda^2 (X^{T-k} - X^{T-k-1})^2 \bigg) ~\bigg|~ r^{1:T-k-1} \bigg] \\
	&
	\qquad\qquad
	\cdot \exp \bigg( \lambda^2 \bigg( \sigma^2 - \sum_{t=1}^{T-k} \Var \big[ X^t ~\big|~ r^{1:t-1} \big] \bigg) \bigg) \\
	&
	= \bigg( 1 + \lambda^2 \Var \big[ X^{T-k} ~\big|~ r^{1:T-k-1} \big] \bigg) \cdot \exp \bigg( \lambda^2 \bigg( \sigma^2 - \sum_{t=1}^{T-k} \Var \big[ X^t ~\big|~ r^{1:t-1} \big] \bigg) \bigg) \\
	&
	\le \exp \bigg( \lambda^2 \Var \big[ X^{T-k} ~\big|~ r^{1:T-k-1} \big] \bigg) \cdot \exp \bigg( \lambda^2 \bigg( \sigma^2 - \sum_{t=1}^{T-k} \Var \big[ X^t ~\big|~ r^{1:t-1} \big] \bigg) \bigg) \\
	& 
	= \exp \bigg( \sigma^2 - \sum_{t=1}^{T-k-1} \Var \big[ X^t ~\big|~ r^{1:t-1} \big] \bigg)
	~.
	\end{align*}
\end{proof}
\section{Missing Proofs in the Truncated Laplacian Distribution}
\label{sec:miss-proof-truncated}
In this section, we will prove lemma \ref{lem:noise-distributions-statistics} and lemma \ref{lem:noise-distributions-privacy}.  
\subsection{Proof of lemma \ref{lem:noise-distributions-statistics}: Mean and Variance}
For the expectation, it's easy to verity that the expectation of $\dist(\mu,\sigma)$ is equal to that of $Lap(\mu,\sigma)$ since it's truncated symmetrically.

Let the standard deviation of $\dist(\mu, \sigma)$ be $\hat{\sigma}$, then we have
\begin{align}
\label{eq:variance}
\begin{split}
\hat{\sigma}^2 & = \textstyle \frac{1}{1-\exp(\frac{-1+\alpha}{\sigma})}\int_{\mu-1+\alpha}^{\mu+1-\alpha} (x-\mu)^2 \frac{1}{2\sigma}\exp(-\frac{|x-\mu|}{\sigma}) \, dx\\
&\textstyle =\frac{1}{(1-\exp(\frac{-1+\alpha}{\sigma}))} \int_{\mu-1+\alpha}^{\mu} (x-\mu)^2 \frac{1}{\sigma}\exp(\frac{x-\mu}{\sigma}) \, dx ~.\\
\end{split}
\end{align}
Note that 
\begin{align*}
&\textstyle  \int_{\mu-1+\alpha}^{\mu} (x-\mu)^2 \frac{1}{\sigma}\exp(\frac{x-\mu}{\sigma}) \, dx\\
&\textstyle =(x-\mu)^2\exp(\frac{x-\mu}{\sigma})|_{\mu-1+\alpha}^{\mu}-\int_{\mu-1+\alpha}^{\mu}2(x-\mu)\exp(\frac{x-\mu}{\sigma})\,dx\\
&\textstyle =-(-1+\alpha)^2\exp(\frac{-1+\alpha}{\sigma})-\left(2(x-\mu)\sigma\exp(\frac{x-\mu}{\sigma})|_{\mu-1+\alpha}^{\mu}-\int_{\mu-1+\alpha}^{\mu}2\sigma\exp(\frac{x-\mu}{\sigma})\, dx\right)\\
&\textstyle =-(-1+\alpha)^2\exp(\frac{-1+\alpha}{\sigma})+2(-1+\alpha)\sigma\exp(\frac{-1+\alpha}{\sigma})+2\sigma^2\exp(\frac{x-\mu}{\sigma})|_{\mu-1+\alpha}^{\mu}\\
&\textstyle =-(-1+\alpha)^2\exp(\frac{-1+\alpha}{\sigma})+2(-1+\alpha)\sigma\exp(\frac{-1+\alpha}{\sigma})+2\sigma^2-2\sigma^2\exp(\frac{-1+\alpha}{\sigma})\\
&\textstyle =2\sigma^2+\exp(\frac{-1+\alpha}{\sigma})\left(2(-1+\alpha)\sigma-2\sigma^2-(-1+\alpha)^2\right)
\end{align*}
Plug in it to (\ref{eq:variance}), we have, 
\begin{align*}
\textstyle
\hat{\sigma}^2  =\frac{1}{(1-\exp(\frac{-1+\alpha}{\sigma}))}\left(2\sigma^2+\exp(\frac{-1+\alpha}{\sigma})\left(2(-1+\alpha)\sigma-2\sigma^2-(-1+\alpha)^2\right)\right)
\leq 2\sigma^2
\end{align*}
\subsection{Proof of Lemma \ref{lem:noise-distributions-privacy}: Privacy Property}
Recall the statement of lemma \ref{lem:noise-distributions-privacy}:

\begin{quote}
	\em
	For $\dist(\mu_1,\sigma_1), \dist(\mu_2,\sigma_2)$, suppose $-\alpha \le \mu_1, \mu_2 \le \alpha$, and $0 < \sigma_1, \sigma_2 \le \alpha$ satisfy that for some   $0<\eta \leq \alpha$, $\sigma = \max \{\sigma_1, \sigma_2\}$, and $\delta > 0$: (1) $\big| \frac{1}{\sigma_1^2} - \frac{1}{\sigma_2^2} \big| \le \frac{\eta}{\alpha \sigma^2}$; (2) $\big| \mu_1 - \mu_2 \big| \le \eta$; and $\delta \le \frac{\eta \ln(2/\delta)}{\sigma}$.
	Then, for any $S \subseteq \R$, we have:
	\[
	\Pr_{z \sim \dist(\mu_1, \sigma_1)} \big[ z \in S \big] \le \exp\left(\frac{4 \eta \ln(2/\delta)}{\sigma}\right) \cdot \Pr_{z \sim \dist(\mu_2, \sigma_2)} \big[ z \in S \big] + \delta
	~.
	\]
\end{quote}
\begin{proof}	
Firstly, the probability density function of $\dist(\mu, \sigma)$ is
$f(x)=\frac{g(x)}{\Phi(\mu+1-\alpha)-\Phi(\mu-1+\alpha)}$, where $g(x)$ is the pdf of the original Laplacian distribution, $g(x)=\frac{1}{2\sigma}\exp(-\frac{|x-\mu|}{\sigma})$, $\Phi(y)$ is the cumulative density function of the original Laplacian distribution, $ \Phi(\mu+1-\alpha)-\Phi(\mu-1+\alpha)=1-\exp(\frac{-1+\alpha}{\sigma})$. So the pdf of $\dist(\mu, \sigma)$ is
$$f(x,\mu,\sigma)=\frac{\frac{1}{2\sigma}\exp(-\frac{|x-\mu|}{\sigma})}{1-\exp(\frac{-1+\alpha}{\sigma})} ~.$$
To prove lemma \ref{lem:noise-distributions-privacy}, it suffices to show that the cumulative density functions  $F (S,\mu_1,\sigma_1)$ and   $F(S,\mu_2,\sigma_2)$  are similar for any  $S\subseteq \R$, formally,

\[F(S,\mu_1, \sigma_1)\leq  \exp\left(\frac{4 \eta \ln(2/\delta)}{\sigma}\right) \cdot F(S,\mu_2,\sigma_2)+ \delta ~.\]

In order to show that $F(S,\mu_1,\sigma_1)$ is similar to $F(S,\mu_2,\sigma_2)$, we firstly show that $F(S,\mu_1,\sigma_1)$ is similar to $F(S,\mu_2,\sigma_1)$. It means that if we truncate two Laplacian distributions with same variance and different expectations according to our parameter choosing, then the cumulative density functions are similar, as shown in lemma \ref{lemma:same variance}. Secondly, we show that  $F(S,\mu_2,\sigma_1)$ is similar to $F(S,\mu_2,\sigma_2)$. It means that if we truncate two Laplacian distributions with same expectation and different variances according to our parameter choosing, then the cumulative density functions are similar, as shown in lemma \ref{lemma:same expectation}. The following equation will be more straightforward to understand that:
\[
\frac{F(S,\mu_1,\sigma_1)}{F(S,\mu_2,\sigma_2)}=\frac{F(S,\mu_1,\sigma_1)}{F(S,\mu_2,\sigma_1)}\cdot \frac{F(S,\mu_2,\sigma_1)}{F(S,\mu_2,\sigma_2)} .\]

\begin{lemma}
	\label{lemma:same variance}
	For the parameters stated above, we have that for any $S\subseteq \R$,
	\[\textstyle 
	F(S,\mu_1,\sigma_1)\leq \exp\left(\frac{\eta \ln(2/\delta)}{\sigma}\right)F(S,\mu_2,\sigma_1) ~.
	\] 
\end{lemma}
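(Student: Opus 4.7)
My plan is to compare the pointwise probability densities of $\dist(\mu_1, \sigma_1)$ and $\dist(\mu_2, \sigma_1)$ in the spirit of the classical Laplace mechanism privacy analysis, and then account for the mismatch between the two truncated supports using the hypothesis $\delta \le \eta \ln(2/\delta)/\sigma$.

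The main inequality is straightforward to set up: because the two distributions share the same scale parameter $\sigma_1$, their truncation normalization constants $1 - \exp\bigl((-1+\alpha)/\sigma_1\bigr)$ coincide. Hence for any $x$ lying in the intersection of the two supports,
\[
\frac{f(x, \mu_1, \sigma_1)}{f(x, \mu_2, \sigma_1)} \;=\; \exp\!\left(\frac{|x - \mu_2| - |x - \mu_1|}{\sigma_1}\right) \;\le\; \exp\!\left(\frac{|\mu_1 - \mu_2|}{\sigma_1}\right) \;\le\; \exp\!\left(\frac{\eta}{\sigma_1}\right)
\]
by the triangle inequality. Since $\sigma_1 \le \sigma = \max\{\sigma_1, \sigma_2\}$, this is no larger than $\exp(\eta/\sigma)$, and in the parameter regime of interest this is in turn at most $\exp\bigl(\eta \ln(2/\delta)/\sigma\bigr)$. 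Integrating over $S$ restricted to the common support then yields the desired multiplicative bound for that portion.

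The main obstacle is the boundary mismatch between the two truncated supports. Assume WLOG $\mu_1 < \mu_2$. Then $[\mu_1 - 1 + \alpha, \mu_2 - 1 + \alpha)$ lies only in the support of $\dist(\mu_1, \sigma_1)$, whereas the symmetric interval $(\mu_1 + 1 - \alpha, \mu_2 + 1 - \alpha]$ is contained only in the support of $\dist(\mu_2, \sigma_1)$; the latter only helps us, since it can only enlarge $F(S, \mu_2, \sigma_1)$ without affecting $F(S, \mu_1, \sigma_1)$. The mass that $\dist(\mu_1, \sigma_1)$ assigns to the former interval is bounded by the underlying Laplacian tail beyond distance $1 - \alpha - \eta$ from $\mu_1$, times the interval length $\eta$, which is exponentially small in $(1 - \alpha - \eta)/\sigma_1$ thanks to $\sigma_1 \le \alpha$ and $\eta \le \alpha$. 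Folding this additive remainder into a purely multiplicative bound of the form stated in the lemma is the most delicate step: it is precisely here that the hypothesis $\delta \le \eta \ln(2/\delta)/\sigma$ is invoked, and it is this bookkeeping step that accounts for the appearance of the $\ln(2/\delta)$ factor in the exponent of the conclusion rather than the tighter $\eta/\sigma_1$ that the common-support analysis alone would yield.
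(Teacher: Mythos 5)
The paper's own proof of this lemma is exactly the pointwise density comparison in your first display, integrated over $S$ — it never mentions the support mismatch and simply declares that the pointwise bound implies the lemma. You have spotted something the paper overlooks: $\dist(\mu_1,\sigma_1)$ and $\dist(\mu_2,\sigma_1)$ are supported on intervals centred at $\mu_1$ and $\mu_2$ respectively, so on the sliver $[\mu_1-1+\alpha,\,\mu_2-1+\alpha)$ (taking WLOG $\mu_1<\mu_2$) the ratio $f(x,\mu_1,\sigma_1)/f(x,\mu_2,\sigma_1)$ is $+\infty$. For a set $S$ contained entirely in that sliver one has $F(S,\mu_1,\sigma_1)>0$ while $F(S,\mu_2,\sigma_1)=0$, so no purely multiplicative bound of the form stated can hold for arbitrary $S\subseteq\R$; the lemma as written is in fact false, and the paper's ``it holds at every point, hence the lemma holds'' step is where the proof breaks.

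Your proposed repair, however, has its own gap. Bounding the sliver's mass by a small additive quantity is fine, but you cannot then ``fold'' it into a purely multiplicative constant: precisely when $S$ sits inside the sliver the right-hand side is zero, so no finite multiplicative factor absorbs the leftover mass. The correct fix is to let the sliver's mass appear as an explicit additive $\delta$ term in this lemma's conclusion, which is consistent with the additive $\delta$ that eventually appears in Lemma~\ref{lem:noise-distributions-privacy}. Separately, your inequality chain has a directional slip: $\sigma_1\le\sigma=\max\{\sigma_1,\sigma_2\}$ gives $\eta/\sigma_1\ge\eta/\sigma$, not $\le$. The step $\exp(\eta/\sigma_1)\le\exp(\eta\ln(2/\delta)/\sigma)$ that the paper invokes needs the hypothesis $\big|1/\sigma_1^2-1/\sigma_2^2\big|\le\eta/(\alpha\sigma^2)$ together with $\eta\le\alpha$ to deduce $\sigma/\sigma_1\le\sqrt{2}$, and hence $\eta/\sigma_1\le\sqrt{2}\,\eta/\sigma\le\eta\ln(2/\delta)/\sigma$ for $\delta$ bounded away from $1$; it is not a direct consequence of $\sigma_1\le\sigma$.
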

\begin{proof}
	We will prove it point-wisely. For any $x\in S$, we have
	\begin{equation*}
	\textstyle
	\frac{f(x,\mu_1,\sigma_1)}{f(x,\mu_2,\sigma_1)}=\exp\left(\frac{|x-\mu_2|}{\sigma_1}-\frac{|x-\mu_1|}{\sigma_1}\right)\leq \exp\left(\frac{|\mu_2-\mu_1|}{\sigma_1}\right)\leq\exp(\frac{\eta}{\sigma_1})\leq \exp\left(\frac{\eta \ln(2/\delta)}{\sigma}\right).
	\end{equation*}
	The second inequality follows by our assumption that $|\mu_1-\mu_2|\leq \eta$ and the last inequality follows by our assumption of $\sigma_1,\sigma_2$ and $\eta$. If it holds at every point, then lemma \ref{lemma:same variance} holds.
\end{proof}

\begin{lemma} 
	\label{lemma:same expectation} 
	For the parameters stated above, 
we have that for any $S\subseteq \R$,
\[
F(S,\mu_2,\sigma_1) \le \exp\left(\frac{4 \eta \ln(2/\delta)}{\sigma}\right) \cdot F(S,\mu_2,\sigma_2)  + \delta
~.
\]

\end{lemma}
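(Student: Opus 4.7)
}

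Unlike Lemma~\ref{lemma:same variance}, a pointwise ratio bound between $f(\cdot,\mu_2,\sigma_1)$ and $f(\cdot,\mu_2,\sigma_2)$ cannot hold for the entire support: the densities' tails decay at different rates $1/\sigma_1$ versus $1/\sigma_2$, so their ratio can be unbounded near the endpoints of the truncation interval. My plan is therefore the standard ``bulk + tail'' split used throughout the differential privacy literature: establish a good pointwise bound on a central interval where the two distributions are close, and bound the remaining probability mass by the additive $\delta$.

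First, I would write out the density ratio explicitly as
\[
    \frac{f(x,\mu_2,\sigma_1)}{f(x,\mu_2,\sigma_2)} = \frac{\sigma_2 \bigl(1-\exp((-1+\alpha)/\sigma_2)\bigr)}{\sigma_1 \bigl(1-\exp((-1+\alpha)/\sigma_1)\bigr)} \cdot \exp\!\Bigl(|x-\mu_2|\bigl(\tfrac{1}{\sigma_2}-\tfrac{1}{\sigma_1}\bigr)\Bigr),
\]
separating it into a \emph{normalization factor} (independent of $x$) and a \emph{pointwise factor} that grows with $|x-\mu_2|$. From the hypothesis $|1/\sigma_1^2 - 1/\sigma_2^2| \le \eta/(\alpha\sigma^2)$, factoring as $(1/\sigma_1-1/\sigma_2)(1/\sigma_1+1/\sigma_2)$ and using $1/\sigma_1 + 1/\sigma_2 \ge 2/\sigma$ yields the workhorse inequality $|1/\sigma_1 - 1/\sigma_2| \le \eta/(2\alpha\sigma)$, and similarly $|\sigma_1-\sigma_2| \le \eta \min(\sigma_1,\sigma_2)/\alpha$.

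Second, I would choose a threshold $R = \Theta(\sigma \ln(2/\delta))$ and split $F(S,\mu_2,\sigma_1)$ into its contributions from $S \cap [\mu_2 - R, \mu_2 + R]$ and from $S$ outside this interval. On the bulk, the pointwise factor is at most $\exp(R \cdot \eta/(2\alpha\sigma)) = \exp(\eta\ln(2/\delta)/(2\alpha)) \le \exp(\eta\ln(2/\delta)/(2\sigma))$, where the last step uses $\sigma \le \alpha$. The normalization factor must also be controlled: $\sigma_2/\sigma_1 \in [1-\eta/\alpha, 1+\eta/\alpha]$ from the bound on $|\sigma_1-\sigma_2|$, and the truncation correction $(1-\exp((-1+\alpha)/\sigma_2))/(1-\exp((-1+\alpha)/\sigma_1))$ is $1 + O(\eta/(\alpha\sigma))$ because the function $\sigma \mapsto 1 - \exp((-1+\alpha)/\sigma)$ is smooth in $\sigma$ with bounded derivative and is bounded away from zero uniformly in $\sigma \in (0,\alpha]$. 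Using $\sigma \le \alpha$ and $\ln(2/\delta) \ge 1$, all of these contributions can be rolled into $\exp(4\eta\ln(2/\delta)/\sigma)$ with room to spare for the constant.

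Third, for the tail $|x-\mu_2| > R$, I would bound the probability directly from the Laplacian form: the untruncated tail is $\exp(-R/\sigma_1) \le \exp(-R/\sigma)$, and dividing by the normalizer $1-\exp((-1+\alpha)/\sigma_1) = \Omega(1)$ (uniformly, since $\sigma_1 \le \alpha$ bounded below $1$) gives at most $O(\exp(-R/\sigma))$. Choosing $R$ slightly larger than $\sigma\ln(2/\delta)$ makes this tail at most $\delta$. The assumption $\delta \le \eta\ln(2/\delta)/\sigma$ is what ensures this additive $\delta$ is of the same order as the privacy budget being spent and can be absorbed cleanly into the overall composition.

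The main obstacle I expect is bookkeeping the \textbf{normalization factor}: the truncation corrections $1 - \exp((-1+\alpha)/\sigma_i)$ depend non-linearly on $\sigma_i$, and showing that their ratio is within $\exp(O(\eta\ln(2/\delta)/\sigma))$ requires using both the magnitude bound on $|\sigma_1-\sigma_2|$ and the fact that the truncation correction is bounded away from zero in the regime $\sigma_i \le \alpha < 1$. Everything else is either a direct application of the assumptions or a textbook Laplacian tail calculation.
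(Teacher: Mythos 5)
Your high-level plan -- split into a bulk region of radius $\Theta(\ln(2/\delta))$ times the scale, prove a pointwise density ratio bound on the bulk, and absorb the tail mass into the additive $\delta$ -- is exactly the structure of the paper's proof. Where you genuinely diverge is in how you control the normalization factor $\frac{\sigma_2}{\sigma_1}\cdot\frac{1-\exp((-1+\alpha)/\sigma_2)}{1-\exp((-1+\alpha)/\sigma_1)}$. The paper reinterprets $1-\exp((-1+\alpha)/\sigma_i)$ as the Laplacian mass of the truncation interval at scale $\sigma_i$, telescopes through the mass of the bulk interval at scale $\sigma_2$, bounds the bulk-to-bulk ratio by the same pointwise argument used for the first factor (so the stray $\sigma_1/\sigma_2$ cancels against the explicit $\sigma_2/\sigma_1$), and uses the already-proved tail bound once more to control the full-to-bulk ratio, landing on $(1+2\delta)\exp(\eta\ln(2/\delta)/\sigma)$. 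You instead bound the normalizer ratio directly via a smoothness / mean-value argument for $\sigma\mapsto 1-\exp((-1+\alpha)/\sigma)$. Both routes work; yours is more elementary and self-contained, while the paper's re-uses the tail estimate that has to be built anyway and avoids any derivative calculation.

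One quantitative point to tighten before writing this up: you assert the truncation-correction ratio is $1+O(\eta/(\alpha\sigma))$ and then say this rolls into $\exp(4\eta\ln(2/\delta)/\sigma)$ ``with room to spare.'' That does not follow from the stated order -- absorbing $\eta/(\alpha\sigma)$ into $\eta\ln(2/\delta)/\sigma$ would require $1/\alpha \le O(\ln(2/\delta))$, which is not assumed. The good news is your own argument actually delivers a much better bound: $|g'(\sigma)|=(1-\alpha)\sigma^{-2}e^{-(1-\alpha)/\sigma}$ is $O(1)$ uniformly on $(0,\alpha]$, $|\sigma_1-\sigma_2|\le\eta\min(\sigma_1,\sigma_2)/\alpha\le\eta$, and the normalizer is bounded below by a constant, so the ratio is $1+O(\eta)$, which is comfortably $\le\exp(O(\eta\ln(2/\delta)/\sigma))$ since $\sigma\le\alpha\le 1\le\ln(2/\delta)$. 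So the approach is sound, but the intermediate estimate should be stated as $1+O(\eta)$ (or $1+O(\eta\sigma/\alpha)$) rather than $1+O(\eta/(\alpha\sigma))$.
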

\begin{proof}
	Firstly we show that for any $\dist(\mu,\sigma)$ such that $\mu \leq \alpha, \sigma \leq  \alpha$, the probability of point being far away from the center (the expectation) is small, i.e. \ for $S=[\mu-1+\alpha,\mu-\ln(2/\delta)\alpha] \cup [\mu+\ln(2/\delta)\alpha,\mu+1-\alpha]$,
		\begin{align}
		\begin{split}
		\label{eq:far away}
		F(S,\mu,\sigma) &  =\int_{\mu-1+\alpha}^{\mu_1-\ln(2/\delta)\alpha} f(x)  \, dx+\int_{\mu+\ln(2/\delta)\alpha}^{\mu+1-\alpha} f(x) \, dx\\
	&   =\frac{2}{1-\exp(\frac{-1+\alpha}{\sigma})} \int_{\mu-1+\alpha}^{\mu-\ln(2/\delta)\alpha} \frac{1}{2\sigma}\exp(\frac{x-\mu}{\sigma})  \, dx\\
	&  =\frac{1}{1-\exp(\frac{-1+\alpha}{\sigma})}\exp(\frac{x-\mu}{\sigma})\large{|}_{\mu-1+\alpha}^{\mu-\ln(2/\delta)\alpha}\\
	&  =\frac{\exp(\frac{-\ln(2/\delta)\alpha}{\sigma})-\exp(\frac{-1+\alpha}{\sigma})}{1-\exp(\frac{-1+\alpha}{\sigma})}\\
	&  \leq \frac{\exp(\frac{-\ln(2/\delta)\alpha}{\sigma})}{1-\exp(\frac{-1+\alpha}{\sigma})}\leq \delta ~.
	\end{split}
	\end{align}
	We have the last inequality because we assume that $\sigma \leq \alpha$.	
	
	For $S\subseteq[\mu_2-\ln(2/\delta)\alpha,\mu_2+\ln(2/\delta)\alpha]$, we will show it point-wisely. Note that for $x\in S$,
	\begin{align}
	\label{eq:privacy}
 	\frac{f(x,\mu_2,\sigma_1)}{f(x,\mu_2,\sigma_2)}=  \underbrace{ \textstyle \exp\left(\frac{|x-\mu|}{\sigma_2}-\frac{|x-\mu|}{\sigma_1}\right)}_{(\mbox{term }  1)}\cdot \underbrace{ \frac{1-\exp(\frac{-1+\alpha}{\sigma_2})}{1-\exp(\frac{-1+\alpha}{\sigma_1})} \cdot \frac{\sigma_2}{\sigma_1}}_{(\mbox{term } 2)}
	\end{align}

	Then we consider the two terms in (\ref{eq:privacy}) separately.
	
	 For $\mbox{term } 1$, we have
	
	\begin{align}
	\label{eq:privacy1}
	\exp\left(\frac{|x-\mu_2|}{\sigma_2}-\frac{|x-\mu_2|}{\sigma_1}\right)  =\exp\left(|x-\mu_2|\left(\frac{1}{\sigma_2}-\frac{1}{\sigma_1}\right)\right)  \leq \exp(\frac{\eta \ln(2/\delta)}{\alpha})
	\end{align}
	The last inequality holds because $|x-\mu_2|\leq \ln(2/\delta)\alpha$ according to the choosing of $S$, in addition  we assume that$\textstyle |\frac{1}{\sigma_1^2}-\frac{1}{\sigma^2}|\leq \frac{\eta}{\alpha \sigma}$, and note that $\textstyle |\frac{1}{\sigma_1}-\frac{1}{\sigma_2}|\cdot \textstyle  \frac{1}{\sigma}\leq \textstyle  |\frac{1}{\sigma_1^2}-\frac{1}{\sigma_2^2}|\leq \textstyle  \frac{\eta}{\alpha \sigma}$, so we have $\textstyle |\frac{1}{\sigma_1}-\frac{1}{\sigma_2}|\leq \frac{\eta}{\alpha \sigma}$.

	For $\mbox{term } 2$, we use $\Phi(a,b,\mu,\sigma)$ to denote the cumulative density function of the region $[a,b]$ with a Laplacian distribution $(\mu,\sigma)$  :
	\begin{align}
	\begin{split}
	\label{eq:privacy2}
	&\frac{1-\exp(\frac{-1+\alpha}{\sigma_2})}{1-\exp(\frac{-1+\alpha}{\sigma_1})}\cdot \frac{\sigma_2}{\sigma_1} =\frac{\Phi(\mu_2-1+\alpha,\mu_2+1-\alpha,\mu_2,\sigma_2)}{\Phi(\mu_2-1+\alpha,\mu_2+1-\alpha,\mu_2,\sigma_1)}\cdot \frac{\sigma_2}{\sigma_1}\\
	&\leq \frac{\Phi(\mu_2-1+\alpha,\mu_2+1-\alpha,\mu_2,\sigma_2)}{\Phi(\mu_2-\ln(2/\delta)\alpha,\mu_2+\ln(2/\delta)\alpha,\mu_2,\sigma_2)}\cdot \frac{\Phi(\mu_2-\ln(2/\delta)\alpha,\mu_2+\ln(2/\delta)\alpha,\mu_2,\sigma_2)}{\Phi(\mu_2-\ln(2/\delta)\alpha,\mu_2+\ln(2/\delta)\alpha,\mu_2,\sigma_1)}\cdot \frac{\sigma_2}{\sigma_1}\\
	&\leq \frac{\Phi(\mu_2-1+\alpha,\mu_2+1-\alpha,\mu_2,\sigma_2)}{\Phi(\mu_2-\ln(2/\delta)\alpha,\mu_2+\ln(2/\delta)\alpha,\mu_2,\sigma_2)}\cdot \exp \left(\frac{\eta\ln(2/\delta)}{\sigma}\right)\cdot \frac{\sigma_1}{\sigma_2}\cdot \frac{\sigma_2}{\sigma_1}\\
	&=\left(\frac{2\Phi(\mu_2-1+\alpha,\mu_2-\ln(2/\delta)\alpha,\mu_2,\sigma_2)+\Phi(\mu_2-\ln(2/\delta)\alpha,\mu_2+\ln(2/\delta)\alpha,\mu_2,\sigma_2)}{\Phi(\mu_2-\ln(2/\delta)\alpha,\mu_2+\ln(2/\delta)\alpha,\mu_2,\sigma_2)}\right)\exp\left(\frac{\eta\ln(2/\delta)}{\sigma}\right)\\
	&\leq(1+2\delta)\exp\left(\frac{\eta\ln(2/\delta)}{\sigma}\right) \leq \exp\left(\frac{2\eta\ln(2/\delta)}{\sigma}\right) ~.
	\end{split}
	\end{align}
	For the inequality in the third line, in order to compare $\Phi(\mu_2-\ln(2/\delta)\alpha,\mu_2+\ln(2/\delta)\alpha,\mu_2,\sigma_2)$ and $\Phi(\mu_2-\ln(2/\delta)\alpha,\mu_2+\ln(2/\delta)\alpha,\mu_2,\sigma_1)$, we compare it point-wisely. For $x\in[\mu_2-\ln(2/\delta)\alpha,\mu_2+\ln(2/\delta)\alpha]$, we know that 
     \[
      \frac{g(x,\mu_2,\sigma_2)}{g(x,\mu_2,\sigma_1)}
     =\exp\left(\frac{|x-\mu_2|}{\sigma_2}-\frac{|x-\mu_2|}{\sigma_1}\right)\cdot\frac{\sigma_1}{\sigma_2}
     \leq \exp(\frac{\eta\ln(2/\delta)}{\sigma})\cdot \frac{\sigma_1}{\sigma_2} ~,
     \]
     where the last inequality is derived from (\ref{eq:privacy1}).
     
      The first inequality  in the last line of (\ref{eq:privacy2}) holds because we have proved that  $\Phi(\mu-1+\alpha,\mu-\ln(2/\delta)\alpha,\mu,\sigma)\leq \delta/2$, and further more $\frac{\delta}{1-\delta}\leq 2\delta$. The last inequality follows by our assumption that $\delta \le \frac{\eta \ln(2/\delta)}{\sigma}$.

	Finally, we combine  (\ref{eq:privacy}), (\ref{eq:privacy1}), (\ref{eq:privacy2}) together, we get that 
	\begin{align}
	\label{eq:center}
	\mbox{for any }S\in[\mu_2-\ln(2/\delta)\alpha,\mu_2+\ln(2/\delta)\alpha], \quad 
	F(S,\mu_2,\sigma_1)   \leq  \exp(\frac{3\eta\ln(2/\delta)}{\sigma})F(S,\mu_2,\sigma_2) ~.
	\end{align}
	Further more, combining it with (\ref{eq:far away}), we get that for any $S\subseteq \R$,
	\[
	F(S,\mu_2,\sigma_1) \leq \exp(\frac{4\eta\ln(2/\delta)}{\sigma})F(S,\mu_2,\sigma_2) +\delta ~.
	\]
\end{proof}
Combine lemma \ref{lemma:same variance} and lemma \ref{lemma:same expectation} together, we get that
	\[
F(S,\mu_1, \sigma_1)\leq \exp(\frac{4\eta\ln(2/\delta)}{\sigma}) F(S,\mu_2,\sigma_2)+\delta ~,
\]
 which indicates lemma \ref{lem:noise-distributions-privacy}. 
\end{proof}
\section{Proof of Lemma \ref{lemma:l(xt,pt)-l(xt,p)}}
\label{sec:miss-proof-utility}
Recall that under the assumption that the supply is sufficiently large, i.e., $b \ge \tilde{O} \big( \frac{\sqrt{m}}{\alpha \epsilon} \big)$, lemma \ref{lemma:l(xt,pt)-l(xt,p)} holds:
	for any $p$ such that $\|p\|_1=p_{max}$,  with high probability, we have:
	\begin{equation*}
	\textstyle
	\sum_{t=1}^T \eta^t \big( L(x^t,p^t) - L(x^t,p) \big) 
	\leq 
	\kl(p\|p^1) + \etasum \cdot O \big( \alpha n \big) 
	~.
	\end{equation*}
	The analysis goes as follows:
	\begin{align*}
	\eta^t \left( L(\vec{x}^t, \vec{p}^t) - L(\vec{x}^t, \vec{p}) \right) 
	&
	= \langle - \vec{\mu}^t, \vec{p} - \vec{p}^t \rangle \\[1.5ex]
	&
	= \langle - \vec{\delta}^t, \vec{p} - \vec{p}^t \rangle + \langle \vec{\delta}^t - \vec{\mu}^t, \vec{p} - \vec{p}^t \rangle \\
	& 
	= \left\langle \ln \left( \frac{\vec{\hat{p}}^{t+1}}{\vec{p}^t} \right), \vec{p} - \vec{p}^t \right\rangle + \langle \vec{\delta}^t - \vec{\mu}^t, \vec{p} - \vec{p}^t \rangle \\
	& 
	= \kl(\vec{p} \| \vec{p}^t) - \kl(\vec{p} \| \vec{\hat{p}}^{t+1}) + \kl(\vec{p}^t \| \vec{\hat{p}}^{t+1}) + \langle \vec{\delta}^t - \vec{\mu}^t, \vec{p} - \vec{p}^t \rangle \\[1.5ex]
	& 
	\le \kl(\vec{p} \| \vec{p}^t) - \kl(\vec{p} \| \vec{p}^{t+1}) + \kl(\vec{p}^t \| \vec{\hat{p}}^{t+1}) - \kl(\vec{p}^{t+1} \| \vec{\hat{p}}^{t+1}) \\
	& \qquad + \langle \vec{\delta}^t - \vec{\mu}^t, \vec{p} - \vec{p}^t \rangle \\[1.5ex]
	& 
	\le \kl(\vec{p} \| \vec{p}^t) - \kl(\vec{p} \| \vec{p}^{t+1}) + \kl(\vec{p}^t \| \vec{\hat{p}}^{t+1}) + \langle \vec{\delta}^t - \vec{\mu}^t, \vec{p} - \vec{p}^t \rangle
	~.
	\end{align*}
	
	In the 4th line of the above equation, we abuse notation and let $\ln \left( \frac{\vec{\hat{p}}^{t+1}}{\vec{p}^t} \right)$ denote a vector where the $j$-th coordinate equals $\ln \left( \frac{\hat{p}^{t+1}_j}{p^t_j} \right)$. 
	
	The first two terms form a telescopic sum.
	Further, the third term can be upper bounded as follows:
	\begin{align*}
	\kl(\vec{p}^t \| \vec{\hat{p}}^{t+1})
	& 
	= \sum_{j=1}^{m+1} \left( p^t_j \ln \left( \frac{p^t_j}{\hat{p}^{t+1}_j} \right) - p^t_j + \hat{p}^{t+1}_j \right) \\
	& 
	= \sum_{j=1}^{m+1} p^t_j \left( \delta^t_j - 1 + \exp(- \delta^t_j) \right) \\
	&
	\le \sum_{j=1}^{m+1} p^t_j \big( \delta^t_j \big)^2 \\
	& 
	= \sum_{j=1}^{m+1} p^t_j \cdot \big( \eta^t \nabla_j D(\vec{p}^t) \big)^2 + \sum_{j=1}^{m+1} p^t_j \cdot \left( \big( \delta^t_j \big)^2 - \big( \mu^t_j \big)^2 \right)
	~.
	\end{align*}
	
	Summing for $t = 1, 2, \dots, T$, we get that:
	\begin{align}
	\label{eq:L(x,p)}
	\begin{split}
	\sum_{t=1}^T \eta^t \left( L(\vec{x}^t, \vec{p}^t) - L(\vec{x}^t, \vec{p}) \right) 
	&
	\le \kl(\vec{p} \| \vec{p}^1) - D(\vec{p} \| \vec{p}^{T+1}) + \sum_{t=1}^T \langle \vec{\delta}^t - \vec{\mu}^t, \vec{p} - \vec{p}^t \rangle \\
	& 
	\qquad 
	+ \sum_{t=1}^T \sum_{j=1}^{m+1} p^t_j \cdot \big( \eta^t \nabla_j D(\vec{p}^t) \big)^2 + \sum_{t=1}^T \sum_{j=1}^{m+1} p^t_j \cdot \left( \big( \delta^t_j \big)^2 - \big( \mu^t_j \big)^2 \right) \\
	&
	\le \kl(\vec{p} \| \vec{p}^1) + \underbrace{\sum_{t=1}^T \langle \vec{\delta}^t - \vec{\mu}^t, \vec{p} - \vec{p}^t \rangle}_{\text{term 1}} \\
	& 
	\qquad
	+ \underbrace{\sum_{t=1}^T \sum_{j=1}^{m+1} p^t_j \cdot \big( \eta^t \nabla_j D(\vec{p}^t) \big)^2}_{\text{term 2}} + \underbrace{\sum_{t=1}^T \sum_{j=1}^{m+1} p^t_j \cdot \left( \big( \delta^t_j \big)^2 - \big( \mu^t_j \big)^2 \right)}_{\text{term 3}}
	~.
	\end{split}
	\end{align}

	Note that $\kl(\vec{p} \| \vec{p}^1)$ is upper bounded by $\pmax \ln(m+1)$ by the upper bound on the KL divergence between two non-negative vectors with $\ell_1$ norm $\pmax$.
	It remains to upper bound the other three terms, which we will explain in details in the following subsections.
	\subsection{Term 1.}
	For any $t$, let $y^t$ be a random variable defined as follows:
	\[
	y^t = \langle \vec{\delta}^t - \vec{\mu}^t, \vec{p} - \vec{p}^t \rangle/p_{max}
	~.
	\]
	Further, let $Y^t = \sum_{i=1}^t y^i$.
	Note that the distribution from which $\vec{\delta}^t$ is drawn, and the dual vector $\vec{p}^t$ depends only on the realization of the random bits in the first $t-1$ rounds, i.e., $r^{1:t-1}$.
	Moreover, $\vec{\delta}^t$ is subsequently realized using fresh random bits $r^t$.
	Therefore, the sequence $Y^{1:T}$ is a martingale w.r.t.\ the random bits $r^{1:T}$.
	Finally, the stopping criteria of our algorithm ensures that, for any realization of the random bits $r^{1:T}$, the total variance is bounded as follows:
	\begin{align*}
	\sum_{t=1}^T \Var \big[ y^t \big| r^{1:t-1} \big]  
	\le \sum_{t=1}^T  \big(\sigma^t\big)^2 
	\le \sum_{t=1}^T  \frac{m \etasum \eta^t}{\epsilon^2} 
	= \frac{ m \etasum^2}{\epsilon^2}
	~.
	\end{align*}
	It's easy to see that $	\sum_{t=1}^T \Var \big[ Y^t \big| r^{1:t-1} \big] =\sum_{t=1}^T \Var \big[ y^t \big| r^{1:t-1} \big] \le  \frac{ m \etasum^2}{\epsilon^2}$. 
	
	By an extension of the concentration bound for martingales in theorem \ref{thm:martingale-concentration}, we get the following bound with high probability:
	\begin{align}
	\label{eq:l(x,p)-term1}
	Y^T & \le \tilde{O} \bigg( \bigg(\sum_{t=1}^T \Var \big[ y^t \big| r^{1:t-1} \big] \bigg)^{\frac{1}{2}} \bigg) \le \tilde{O} \bigg( \frac{ \sqrt{m} \etasum}{\epsilon} \bigg)  = \tilde{O} \bigg( \frac{ \sqrt{m}}{b \epsilon \alpha} \bigg)
	~.
	\end{align}
	It derives that $$\langle \vec{\delta}^t - \vec{\mu}^t, \vec{p} - \vec{p}^t \rangle=Y^T\cdot p_{max}\leq \tilde{O} \bigg( \frac{p_{max} \sqrt{m}}{b \epsilon \alpha} \bigg) \leq \etasum \cdot O(\alpha n)~,$$
	where the last inequality is derived by plugging in the value of $p_{max},\etasum$ and applying the assumption that $b\geq \tilde{O}\left(\frac{\sqrt{m}}{\alpha \epsilon}\right)$.
	
	\subsection{Term 2}
	
	It is bounded as follows:
	\begin{align}
	\label{eq:l(x,p)-term2}
	\begin{split}
	\sum_{t=1}^T \sum_{j=1}^{m+1} p^t_j \cdot \big( \eta^t \nabla_j D(\vec{p}^t) \big)^2
	&
	\le \alpha \cdot \sum_{t=1}^T \sum_{j=1}^{m+1} \eta^t p^t_j \big| \nabla_j D(\vec{p}^t) \big| \\
	& 
	\le 2 \alpha \cdot \pmax b \sum_{t=1}^T \eta^t - \alpha \cdot \sum_{t=1}^T \sum_{j=1}^{m+1} \eta^t p^t_j \nabla_j D(\vec{p}^t) \\
	&
	\le 2 \alpha \cdot \pmax b \sum_{t=1}^T \eta^t + \alpha \cdot \sum_{t=1}^T \eta^t \cdot n - \alpha \cdot \sum_{t=1}^T \eta^t L( \vec{x}^t, \vec{p}^t) \\
	&
	= \etasum \cdot O \left( 2 \alpha n \right) - \alpha \cdot \sum_{t=1}^T \eta^t L( \vec{x}^t, \vec{p}^t)\\
	&\leq \etasum \cdot O \left( 2 \alpha n \right)
	~.
	\end{split}
	\end{align}
	
	\subsection{Term 3}
	To get a better upper bound of term 3, we further divide it into two parts, its expectation, which is simply the sum of variances in different rounds, and the difference between the realizations of the expected value in each round $t$, which forms a martingale.
	Concretely, we have:
	\begin{align*}
	\sum_{t=1}^T \sum_{j=1}^{m+1} p^t_j \cdot \left( \big( \delta^t_j \big)^2 - \big( \mu^t_j \big)^2 \right)
	& 
	= 
	\sum_{t=1}^T \sum_{j=1}^{m+1} p^t_j \cdot \Var_{r^t} \big[ \delta^t_j ~\big|~ r^{1:t-1} \big] \\
	& 
	\qquad
	+ \sum_{t=1}^T \sum_{j=1}^{m+1} p^t_j \cdot \left( \big( \delta^t_j \big)^2 - \big( \mu^t_j \big)^2 - \Var_{r^t} \big[ \delta^t_j ~\big|~ r^{1:t-1} \big] \right)
	\end{align*}
	
	Consider the 1st term on the RHS.
	By the definition of the algorithm, it equals:
\[
\sum_{t=1}^T \sum_{j=1}^{m+1} p^t \big( \sigma^t \big)^2 
= \sum_{t=1}^T \pmax \big( \sigma^t \big)^2 
= \sum_{t=1}^T \pmax \frac{m \etasum \eta^t}{
	\epsilon^2}
= \frac{\pmax m \etasum^2}{\epsilon^2}
\leq \etasum \cdot O(\alpha n)
~.	\]

	The last inequality is derived from our assumption $b\geq \tilde{O}\left(\frac{\sqrt{m}}{\alpha \epsilon}\right)$.
	
	Next, consider the 2nd term on the RHS.
	All expectations and variances below are taken over the random bits $r^t$ in round $t$ and conditioned on the realization of $r^{1:t-1}$ in the previous rounds.
	For simplicity of notations, we will omit the subscript $r^t$ and the conditions $r^{1:t-1}$ in the following discussions.
	As the analysis are the same for different $t$'s and $j$'s, we will further fix some $t$ and $j$, and omit the superscript $t$ and subscript $j$ below. 
	Finally, it is more convenient to consider $z = \delta - \mu$ as the random variable.
	We have:
	\begin{align*}
	\Var \left[ \delta^2 - \mu^2 - \Var \big[ \delta \big] \right] 
	& 
	= \Var \left[ 2 z \mu + z^2 - \E \big[ z^2 \big] \right] \\
	&
	= \E \left[ \left( 2 z \mu + z^2 - \E \big[ z^2 \big] \right)^2 \right] \\
	&
	= 4 \mu^2 \E \big[ z^2 \big] + 4 \mu \E \big[ z^3 \big] + \E \big[ z^4 \big] - \E \big[ z^2 \big]^2 \\[1ex]
	& 
	\le 4 \mu^2 \E \big[ z^2 \big] + 4 \mu \E \big[ z^3 \big] + \E \big[ z^4 \big] \\[1.5ex]
	&
	\le 4 \mu^2 \E \big[ z^2 \big] + 4 \mu \E \big[ z^2 \big] + \E \big[ z^2 \big] \\[1.5ex]
	& 
	= (2 \mu + 1)^2 \sigma^2 \\[1.5ex]
	& 
	< 2 \sigma^2
	~.
	\end{align*}
	
	The rest of the analysis is identical to that of term 1.
	We get the following upper bound with high probability:
	\begin{align}
	\label{eq:l(x,p)-term3}
	\sum_{t=1}^T \sum_{j=1}^{m+1} p^t_j \cdot \left( \big( \delta^t_j \big)^2 - \big( \mu^t_j \big)^2 - \Var_{r^t} \big[ \delta^t_j ~\big|~ r^{1:t-1} \big] \right)
	\le
	\tilde{O} \bigg( \frac{\pmax \sqrt{m}}{b \epsilon \alpha} \bigg) \leq \etasum O(\alpha n)
	~.
	\end{align}
	
	Combining (\ref{eq:L(x,p)}), (\ref{eq:l(x,p)-term1}), (\ref{eq:l(x,p)-term2}) and (\ref{eq:l(x,p)-term3}) together, lemma \ref{lemma:l(xt,pt)-l(xt,p)} is proved. 

\section{Privacy: Full Proof}
\label{app:privacy-full}

The proof is almost identical to the standard composition theorem of differential privacy.
The only difference is that instead of using the standard Azuma inequality, we need to use our extended version. Also the argument is similar in spirit to that in RRUV \cite{rogers2016privacy}.

We first restate some textbook notations and lemmas in the proof of the standard composition theorem.

Define the max divergence between two random variables $Y$ and $Z$, both of which take value from some domain $\Omega$, to be:
\[
    \maxdiv \big( Y \| Z \big) = \max_{S \subseteq \Omega} \ln \frac{\Pr\big[Y \in S\big]}{\Pr\big[Z \in S\big]}
    ~.
\]

Further, the $\delta$-approximate max divergence is defined to be:
\[
    \maxdiv^\delta \big( Y \| Z \big) =  \max_{S \subseteq \Omega : \Pr [ Y \in S ] > \delta} \ln \frac{\Pr\big[Y \in S\big] - \delta}{\Pr\big[Z \in S\big]}
    ~.
\]

We abuse notation and let $\kl\big(Y \| Z \big)$ denote the KL divergence of the corresponding distributions from which $Y$ and $Z$ are drawn from.

Finally, let $\Delta(Y, Z)$ denote the statistical distance between the two distributions.

\begin{lemma}[Lemma 3.17 of Dwork and Roth \cite{DworkR/2014/FTTCS}]
    For any random variables $Y$ and $Z$ taking values from a common domain $\Omega$, we have:
    \begin{enumerate}
        \item $\maxdiv^\delta \big( Y \| Z \big) \le \epsilon$ if and only if there exists a random variable $Y'$ such that $\Delta(Y, Y') \le \delta$ and $\maxdiv^\delta \big( Y' \| Z \big) \le \epsilon$.
        \item We have both $\maxdiv^\delta \big( Y \| Z \big) \le \epsilon$ and $\maxdiv^\delta \big( Z \| Y \big) \le \epsilon$ if and only if there exist random variables $Y'$ and $Z'$ such that $\Delta(Y, Y') \le \delta / (e^\epsilon+1)$, $\Delta(Z, Z') \le \delta / (e^\epsilon+1)$, and $\maxdiv \big(Y' \| Z'\big) \le \epsilon$.
    \end{enumerate}
\end{lemma}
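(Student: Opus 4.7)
The plan is to establish both parts by an explicit redistribution of probability mass, combined with the triangle inequality for statistical distance. The forward directions require careful constructions; the reverse directions reduce to standard arithmetic on probabilities.

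For the first part, the reverse direction is the easier one: given $Y'$ with $\Delta(Y, Y') \le \delta$ and the exact bound $\maxdiv(Y' \| Z) \le \epsilon$, for any $S \subseteq \Omega$ we directly get $\Pr[Y \in S] \le \Pr[Y' \in S] + \delta \le e^\epsilon \Pr[Z \in S] + \delta$, which yields $\maxdiv^\delta(Y \| Z) \le \epsilon$. For the forward direction, the plan is to identify the ``overshoot'' set $B = \{\omega : \Pr[Y = \omega] > e^\epsilon \Pr[Z = \omega]\}$ on which the likelihood ratio violates the pure DP bound, and define $Y'$ by capping its mass at $e^\epsilon \Pr[Z = \omega]$ on $B$ and leaving it unchanged on $\Omega \setminus B$. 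Applying the definition of $\maxdiv^\delta$ to the set $S = B$ shows that $\Pr[Y \in B] - e^\epsilon \Pr[Z \in B] \le \delta$, so the total mass removed from $B$ is at most $\delta$. I would then return this excess mass to $\Omega \setminus B$ in a way that preserves the pointwise ratio bound, for example by distributing it proportionally to the remaining slack $e^\epsilon \Pr[Z = \omega] - \Pr[Y = \omega]$; this is feasible since the total slack outside $B$ dominates the excess removed from $B$. The resulting $Y'$ satisfies $\maxdiv(Y' \| Z) \le \epsilon$ pointwise, and since at most $\delta$ mass has moved, $\Delta(Y, Y') \le \delta$.

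Part 2 will follow the same blueprint, applied symmetrically to both directions. For the reverse direction, the triangle inequality gives $\Pr[Y \in S] \le \Pr[Y' \in S] + \Delta(Y, Y') \le e^\epsilon \Pr[Z' \in S] + \Delta(Y, Y') \le e^\epsilon \Pr[Z \in S] + e^\epsilon \Delta(Z, Z') + \Delta(Y, Y') \le e^\epsilon \Pr[Z \in S] + \delta$, and symmetrically for the $Z \| Y$ direction. For the forward direction I would redistribute mass on both $Y$ and $Z$ simultaneously: for each $\omega$ whose likelihood ratio exceeds $e^\epsilon$ (or falls below $e^{-\epsilon}$), the pointwise ratio can be brought into the allowed range either by removing mass from the over-represented distribution or by adding mass to the under-represented one, and a short calculation shows that the optimal split allocates fractions $1/(e^\epsilon+1)$ and $e^\epsilon/(e^\epsilon+1)$ to the two sides. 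This yields the required $\delta/(e^\epsilon+1)$ bound on each statistical distance.

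The main obstacle will be verifying the combinatorial/measure-theoretic details of the construction in part 2, namely that the simultaneous redistribution yields valid probability measures $Y'$ and $Z'$ satisfying $\maxdiv(Y' \| Z') \le \epsilon$ pointwise, while keeping each statistical distance at most $\delta/(e^\epsilon+1)$ rather than the looser $\delta/2$ one would get from a naive symmetric split. Achieving the sharp $1 : e^\epsilon$ ratio requires care in choosing where the removed mass is deposited on both sides so that no fresh violation of the ratio bound is introduced, and the accounting must be done globally across $\Omega$ rather than coordinate-by-coordinate.
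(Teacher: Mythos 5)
The paper does not prove this statement---it imports it as Lemma~3.17 of Dwork and Roth and invokes it in the privacy argument of Appendix~D---so there is no paper proof to compare yours against, and I evaluate your proposal on its own merits. (A small transcription issue: part~1 as stated should read $\maxdiv(Y'\|Z)\le\epsilon$, the \emph{exact} max divergence, rather than $\maxdiv^\delta$; you correctly work with the intended version.) Your part~1 is correct in both directions, and the reverse direction of part~2 is also correct---the arithmetic $e^\epsilon\cdot\tfrac{\delta}{e^\epsilon+1}+\tfrac{\delta}{e^\epsilon+1}=\delta$ is exactly why the $\delta/(e^\epsilon+1)$ threshold is the right one.

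The genuine gap is the forward direction of part~2, which you flag but leave open. The $1:e^\epsilon$ split is the right local move: on $S=\{\omega:\Pr[Y=\omega]>e^\epsilon\Pr[Z=\omega]\}$, decreasing $Y$ and increasing $Z$ each by $e_\omega/(e^\epsilon+1)$ repairs the ratio at $\omega$ since $e_\omega/(e^\epsilon+1)+e^\epsilon\cdot e_\omega/(e^\epsilon+1)=e_\omega$, and symmetrically on $T=\{\omega:\Pr[Z=\omega]>e^\epsilon\Pr[Y=\omega]\}$. But afterwards, with $\alpha=\sum_S e_\omega\le\delta$ and $\beta=\sum_T\big(\Pr[Z=\omega]-e^\epsilon\Pr[Y=\omega]\big)\le\delta$, the modified $Y$ has total mass $1+(\beta-\alpha)/(e^\epsilon+1)$ and the modified $Z$ has total mass $1+(\alpha-\beta)/(e^\epsilon+1)$: neither is a probability measure unless $\alpha=\beta$. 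One must therefore shift a further $|\alpha-\beta|/(e^\epsilon+1)$ of mass outside $S\cup T$. Doing so keeps both statistical distances at most $\max(\alpha,\beta)/(e^\epsilon+1)\le\delta/(e^\epsilon+1)$ automatically, but whether it can be done while preserving the pointwise ratio constraint in both directions is exactly the ``global accounting across $\Omega$'' you anticipate as the main obstacle, and you have not verified its feasibility. Until that check is carried out, part~2 is a plan rather than a proof.
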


\begin{lemma}[e.g., Lemma 3.18 of Dwork and Roth \cite{DworkR/2014/FTTCS}]
    Suppose two random variables $Y$ and $Z$ satisfy that $\maxdiv \big( Y \| Z \big) \le \epsilon$ and $\maxdiv \big( Z \| Y \big) \le \epsilon$.
    Then, $\kl (Y \| Z) \le \epsilon \big( e^\epsilon - 1 \big)$
\end{lemma}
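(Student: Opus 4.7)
The plan is to reduce the claim to an elementary pointwise estimate on the log-density ratio. Let $p_Y$ and $p_Z$ denote the densities (or mass functions) of $Y$ and $Z$ on their common domain, and set $f(x) = \ln \big( p_Y(x) / p_Z(x) \big)$. The two hypotheses $\maxdiv(Y \| Z) \le \epsilon$ and $\maxdiv(Z \| Y) \le \epsilon$, applied to the events $\{ x : f(x) > \epsilon \}$ and $\{ x : f(x) < -\epsilon \}$, force both sets to carry zero measure under either distribution, so we may safely treat $|f(x)| \le \epsilon$ as a pointwise bound.

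The core idea is to rewrite the KL divergence by substituting $p_Y = p_Z \cdot e^f$ and then decompose using $e^f \cdot f = f + (e^f - 1) f$:
\[
    \kl(Y \| Z) = \int p_Z \, e^f \, f \, dx = \int p_Z \, f \, dx + \int p_Z \, (e^f - 1) \, f \, dx = -\kl(Z \| Y) + \int p_Z \, (e^f - 1) \, f \, dx .
\]
The first term is non-positive by non-negativity of the KL divergence, so it suffices to bound the remaining integral by $\epsilon(e^\epsilon - 1)$.

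This in turn reduces to a one-variable inequality: for $f \in [-\epsilon, \epsilon]$, we have $0 \le (e^f - 1) f \le \epsilon (e^\epsilon - 1)$. Non-negativity is immediate since $e^f - 1$ and $f$ always share a sign. For the upper bound, when $f \in [0, \epsilon]$ the product is monotone increasing in $f$ and yields the bound directly; when $f \in [-\epsilon, 0]$ we rewrite $(e^f - 1) f = (1 - e^f)(-f) \le (1 - e^{-\epsilon}) \epsilon \le (e^\epsilon - 1) \epsilon$. Integrating against the probability density $p_Z$, which has total mass $1$, produces the desired bound $\kl(Y \| Z) \le \epsilon(e^\epsilon - 1)$.

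I do not expect any real obstacle here: the whole argument is the decomposition trick plus a short calculus check. The only mildly subtle point is justifying the pointwise bound on $f$ from the event-based max divergence hypothesis, which is handled by the standard exceptional-set argument mentioned above; everything else is elementary integration against a probability measure.
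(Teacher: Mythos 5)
Your proof is correct. The paper does not actually include a proof of this lemma---it cites it directly as Lemma~3.18 of Dwork and Roth~\cite{DworkR/2014/FTTCS}---so there is no in-paper argument to compare against. Your decomposition, which is equivalent to writing $\kl(Y\|Z) + \kl(Z\|Y) = \int p_Z\,(e^f - 1)\,f\,dx$ with $f = \ln(p_Y/p_Z)$, dropping the nonnegative term $\kl(Z\|Y)$, and then invoking the pointwise bound $(e^f-1)f \le \epsilon(e^\epsilon-1)$ for $f \in [-\epsilon,\epsilon]$, is precisely the standard textbook argument for this statement, and the almost-everywhere bound $|f|\le\epsilon$ does follow from the two max-divergence hypotheses (which also give mutual absolute continuity, so $f$ is finite a.e.) exactly as you describe.
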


Consider any neighboring datasets $\mathcal{D}$ and $\widetilde{\mathcal{D}}$.
Let $P^t$ and $\widetilde{P}^t$ be random variables that are equal to the prices posted in round $t$ running Algorithm~\ref{alg:mw-optimized-step-sizes} on $\mathcal{D}$ and $\widetilde{\mathcal{D}}$ respectively.
Further, let $P^t(p^{1:t-1})$ denote the conditional random variable conditioned on $P^1 = p^1, P^2 = p^2, \dots, P^{t-1} = p^{t-1}$; 
define $\widetilde{P}^t(p^{1:t-1})$ similarly.

The following lemma follows from the discussions in Section~\ref{sec:privacy}.

\begin{lemma}
    For any $t \in [T]$, any $j \in [m]$, and any $p^1, p^2, \dots, p^{t-1}$, we have:
    \[
        \maxdiv^{\frac{\delta}{Tm}} \big( P^t_j(p^{1:t-1}) \| \widetilde{P}^t_j(p^{1:t-1}) \big) \le \epsilon^t
        ~,
    \]
    where:
    \[
        \epsilon^t = \frac{\epsilon \sqrt{\eta^t}}{\sqrt{m \etasum}}
    \]
\end{lemma}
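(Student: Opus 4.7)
The plan is to verify that the per-round argument sketched in Section~\ref{sec:privacy} constitutes a complete proof once we condition on the history. Fix any realization $p^{1:t-1}$; on either dataset the random variable $P^t_j$ is a deterministic function (via a coordinate-wise multiplicative update followed by a global normalization) of the single-round noise vector whose $j$-th coordinate has law $\dist(\mu^t_j,\sigma^t)$. Since max divergence cannot increase under a common measurable post-processing, it suffices to upper bound the $\delta/(Tm)$-approximate max divergence between this noise distribution and its counterpart $\dist(\tilde\mu^t_j,\tilde\sigma^t)$ on $\widetilde{\mathcal{D}}$, which is precisely the setting of Lemma~\ref{lem:noise-distributions-privacy}.

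The next step is to verify the three hypotheses of Lemma~\ref{lem:noise-distributions-privacy} with $\eta$ set to (a constant multiple of) $\eta^t$ and $\delta$ replaced by $\delta/(Tm)$. Because $\mathcal{D}$ and $\widetilde{\mathcal{D}}$ differ only in the data of one agent whose value and demand lie in $[0,1]$, and because $p^{1:t-1}$ is fixed, each subgradient coordinate $\nabla_j D(p^t)$ changes by at most $1$. The definition of $\eta^t$ then gives $|1/\eta^t - 1/\tilde\eta^t| \le 1/\alpha$, which, after substituting $\sigma^t = \sqrt{m\etasum\eta^t\ln(Tm/\delta)}/\epsilon$, immediately yields the variance hypothesis. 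For the mean bound, write $\mu^t_j - \tilde\mu^t_j = \eta^t\bigl(\nabla_j D(p^t) - \nabla_j \widetilde D(p^t)\bigr) + \bigl(\eta^t - \tilde\eta^t\bigr)\nabla_j \widetilde D(p^t)$; the first term is at most $\eta^t$ in absolute value, and using $|\eta^t - \tilde\eta^t| = \eta^t\tilde\eta^t|1/\eta^t - 1/\tilde\eta^t| \le \eta^t\tilde\eta^t/\alpha$ together with $|\nabla_j\widetilde D(p^t)|\cdot\tilde\eta^t \le \alpha$ (from the choice of $\tilde\eta^t$) bounds the second term by $\eta^t$. The triangle inequality then gives $|\mu^t_j - \tilde\mu^t_j| \le 2\eta^t$. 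The third hypothesis is a routine calculation from the definition of $\sigma^t$ and the assumption $b \ge \tilde O\big(\sqrt m/(\alpha\epsilon)\big)$ that controls $\etasum$.

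Plugging the verified hypotheses into Lemma~\ref{lem:noise-distributions-privacy} then gives
\[
    \maxdiv^{\delta/(Tm)}\!\bigl(\dist(\mu^t_j,\sigma^t) \,\big\|\, \dist(\tilde\mu^t_j,\tilde\sigma^t)\bigr) \le \frac{O\bigl(\eta^t\ln(Tm/\delta)\bigr)}{\sigma^t} = O\!\left(\frac{\epsilon\sqrt{\eta^t\ln(Tm/\delta)}}{\sqrt{m\etasum}}\right)~,
\]
which matches the claimed $\epsilon^t = \epsilon\sqrt{\eta^t/(m\etasum)}$ once the $\sqrt{\ln(Tm/\delta)}$ factor is absorbed into the $\tilde O$ convention of the main theorem. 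Transporting this bound through the deterministic post-processing from the noise to $P^t_j$ finishes the proof.

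The only mildly subtle point worth flagging, though not an obstacle for this single-round statement, is that the per-round budget $\epsilon^t$ is itself data-dependent through $\eta^t$. This is harmless here because we are comparing two specific fixed distributions, but it is precisely the reason why the subsequent composition step cannot invoke the textbook advanced composition theorem off the shelf and must instead proceed via the martingale concentration inequality of Appendix~\ref{app:concentration}.
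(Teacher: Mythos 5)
Your proof follows essentially the same route as the paper's: the paper's ``proof'' of this lemma is just a pointer to the discussion in Section~\ref{sec:privacy}, where the identical per-round computations ($|1/\eta-1/\tilde\eta|\le 1/\alpha$, $|\mu-\tilde\mu|\le 2\eta^t$, the variance-ratio bound, and the invocation of Lemma~\ref{lem:noise-distributions-privacy}) are carried out. You additionally (and correctly) flag the $\sqrt{\ln(Tm/\delta)}$ bookkeeping discrepancy between the stated $\epsilon^t$ and what Lemma~\ref{lem:noise-distributions-privacy} actually yields, as well as the data-dependence of $\epsilon^t$ that blocks off-the-shelf advanced composition---both points the paper treats somewhat loosely but ultimately handles via the martingale argument.
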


Therefore, there exists $Q^t_j(p^{1:t-1})$ and $\widetilde{Q}^t_j(p^{1:t-1})$ such that:
\begin{align*}
    \Delta \big( P^t_j(p^{1:t-1}), Q^t_j(p^{1:t-1}) \big) & \le \frac{(\delta/Tm)}{1+\exp(\epsilon^t)} \le \frac{\delta}{2Tm} \\
    \Delta \big( \widetilde{P}^t_j(p^{1:t-1}), \widetilde{Q}^t_j(p^{1:t-1}) \big) & \le \frac{(\delta/Tm)}{1+\exp(\epsilon^t)} < \frac{\delta}{2Tm} \\[1ex]
    \maxdiv \big( Q^t_j(p^{1:t-1}) \| \widetilde{Q}^t_j(p^{1:t-1}) \big) & \le \epsilon^t
\end{align*}

The first two properties indicates that it suffices to show that:
\begin{equation}
    \label{eqn:coupled-variables-divergence}
    \maxdiv^\delta \big( Q^1, Q^2, \dots, Q^T \| \widetilde{Q}^1, \widetilde{Q}^2, \dots, \widetilde{Q}^T \big)
    \le \epsilon
\end{equation}

The third property, on the other hand, implies that:
\begin{align*}
    \kl \big( Q^t(p^{1:t-1}) \| \widetilde{Q}^t(p^{1:t-1}) \big) 
    & 
    = \sum_{j=1}^m \kl \big( Q^t_j(p^{1:t-1}) \| \widetilde{Q}^t_j(p^{1:t-1}) \big) \\
    &
    \le m \epsilon^t \big( \exp(\epsilon^t) - 1 \big) \\[2ex]
    &
    \le 2m  \big(\epsilon^t\big)^2
\end{align*}

Consider the following martingale, with $p^1, p^2, \dots p^T$ drawn from the distribution of $Q^1, Q^2, \dots, Q^T$:
\begin{align*}
    \ln \frac{\Pr \big[ Q^1 = p^1, \dots, Q^T = p^T \big]}{\Pr \big[ \widetilde{Q}^1 = p^1, \dots, \widetilde{Q}^T = p^T \big]} 
    &
    = \sum_{t=1}^T \ln \frac{\Pr \big[ Q^t(p^{1:t-1}) = p^t \big]}{\Pr \big[ \widetilde{Q}^t(p^{1:t-1}) = p^t \big]} 
\end{align*}

Its expectation is at most:
\[
    \sum_{t=1}^T 2m \big( \epsilon^t \big)^2 \le \frac{\epsilon^2}{\ln(2/\delta)}
\]

The sum of variance, in the sense of Theorem~\ref{thm:martingale-concentration}, is also bounded by this amount.
Then, by Theorem~\ref{thm:martingale-concentration}, we get that with probability at least $1-\delta$:
\[
    \ln \frac{\Pr \big[ Q^1 = p^1, \dots, Q^T = p^T \big]}{\Pr \big[ \widetilde{Q}^1 = p^1, \dots, \widetilde{Q}^T = p^T \big]} \le \epsilon
    ~.
\]

This proves Eqn.~\eqref{eqn:coupled-variables-divergence} and finishes the argument.

\end{document}